\newcommand\vldbdoi{XX.XX/XXX.XX}
\newcommand\vldbpages{XXX-XXX}
\newcommand\vldbvolume{17}
\newcommand\vldbissue{12}
\newcommand\vldbyear{2024}
\newcommand\vldbauthors{\authors}
\newcommand\vldbtitle{\shorttitle} 
\newcommand\vldbavailabilityurl{}
\newcommand\vldbpagestyle{plain}
\newif\ifcomments  
\newcommand{\calA}{\ensuremath{\mathcal{A}}}
\newcommand{\calN}{\ensuremath{\mathcal{N}}}
\newcommand{\calO}{\ensuremath{\mathcal{O}}}
\newcommand{\calS}{\ensuremath{\mathcal{S}}}
\newcommand{\calT}{\ensuremath{\mathcal{T}}}
\renewcommand{\Pr}{\mathop{\mathbf{Pr}}}
\newtheorem{lem}{Lemma}[section]
\newtheorem{thm}[lem]{Theorem}
\newtheorem{defn}[lem]{Definition}
\newtheorem{definition}[lem]{Definition}
\newcommand{\vast}{\bBigg@{4}}
\newcommand{\Vast}{\bBigg@{5}}
\DeclarePairedDelimiterX{\infdivx}[2]{(}{)}{%
  #1\;\delimsize\|\;#2%
}
\newcommand{\mypar}[1]{\smallskip
	\noindent{\textbf{{#1}:}}}
\renewcommand{\epsilon}{\varepsilon}
\newcommand{\tree}{\calT}
\newcommand{\init}{{\texttt{InitializeTree}}\,}
\newcommand{\addt}{{\texttt{AddToTree}}\,}
\newcommand{\gettotal}{{\texttt{GetTotalSum}}\,}
\newcommand{\node}{\texttt{node}}
\newcommand{\nd}{\texttt{node}}
\newcommand{\tr}{\texttt{Tr}}
\newcommand{\countt}{\texttt{count}}
\newcommand{\pr}[2]{{\ifx&#1& \mathbb{P} \else \underset{#1}{\mathbb{P}} \fi \left[#2\right]}}
\newcommand{\thresh}{\mu}
\newcommand{\DpTree}{\texttt{DP-Tree}\,}
\newcommand{\DpTrees}{\texttt{DP-Trees}\,}
\newcommand{\Sum}{\texttt{Sum}}
\newcommand{\groupby}{\texttt{GROUP BY}\,}
\newcommand{\bigOh}[1]{\calO\left(#1\right)}
\newcommand{\lrt}{\texttt{LRT}}
\newcommand{\recordkey}{\texttt{key}\,}
\newcommand{\recordvalue}{\texttt{value}\,}
\newcommand{\recordtimestamp}{\texttt{timestamp}\,}
\newcommand{\recorduserid}{\texttt{user\_id}\,}
\newcommand{\level}{\texttt{level}}
\begin{document}

\title{Differentially Private Stream Processing at Scale}

\author{Bing Zhang$^{1}$, Vadym Doroshenko$^{\dagger1}$, Peter Kairouz$^{\dagger3}$, Thomas Steinke$^{\dagger2}$, Abhradeep Thakurta$^{\dagger2}$, Ziyin Ma$^1$, Eidan Cohen$^1$, Himani Apte$^1$, Jodi Spacek$^1$}
\def \authors{Bing Zhang, Vadym Doroshenko, Peter Kairouz, Thomas Steinke, Abhradeep Thakurta, Ziyin Ma, Eidan Cohen, Himani Apte, Jodi Spacek}
\affiliation{\institution{$^1$Google}\country{}}
\affiliation{\institution{$^2$Google DeepMind}\country{}}
\affiliation{\institution{$^3$Google Research}\country{}}
\email{{zhangbing,dvadym,kairouz,steinke,athakurta,ziyinma,eidanch,himaniapte,jodes}@google.com}

\begin{abstract}
We design, to the best of our knowledge, the first differentially private (DP) stream aggregation processing system at scale. Our system -- \emph{Differential Privacy SQL Pipelines (DP-SQLP)} -- is built using a streaming framework similar to Spark streaming, and is built on top of the Spanner database and the F1 query engine from Google.

Towards designing DP-SQLP we make both algorithmic and systemic advances, namely, we (i) design a novel (user-level) DP key selection algorithm that can operate on an unbounded set of possible keys, and can scale to one billion keys that users have contributed, (ii) design a preemptive execution scheme for DP key selection that avoids enumerating all the keys at each triggering time, and (iii) use algorithmic techniques from DP continual observation to release a continual DP histogram of user contributions to different keys over the stream length. We empirically demonstrate the efficacy by obtaining at least $16\times$ reduction in error over meaningful baselines we consider. We implemented a streaming differentially private user impressions for Google Shopping with DP-SQLP. The streaming DP algorithms are further applied to Google Trends.
\end{abstract}

\maketitle
\def\thefootnote{$\dagger$}\footnotetext{Contributed equally.}

\pagestyle{\vldbpagestyle}
\begingroup\small\noindent\raggedright\textbf{PVLDB Reference Format:}\\
\vldbauthors. \vldbtitle. PVLDB, \vldbvolume(\vldbissue): \vldbpages, \vldbyear.\\
\href{https://doi.org/\vldbdoi}{doi:\vldbdoi}
\endgroup
\begingroup
\renewcommand\thefootnote{}\footnote{\noindent
This work is licensed under the Creative Commons BY-NC-ND 4.0 International License. Visit \url{https://creativecommons.org/licenses/by-nc-nd/4.0/} to view a copy of this license. For any use beyond those covered by this license, obtain permission by emailing \href{mailto:info@vldb.org}{info@vldb.org}. Copyright is held by the owner/author(s). Publication rights licensed to the VLDB Endowment. \\
\raggedright Proceedings of the VLDB Endowment, Vol. \vldbvolume, No. \vldbissue\ %
ISSN 2150-8097. \\
\href{https://doi.org/\vldbdoi}{doi:\vldbdoi} \\
}\addtocounter{footnote}{-1}\endgroup

\ifdefempty{\vldbavailabilityurl}{}{
\vspace{.3cm}
\begingroup\small\noindent\raggedright\textbf{PVLDB Artifact Availability:}\\
The source code, data, and/or other artifacts have been made available at \url{\vldbavailabilityurl}.
\endgroup
}

\section{Introduction}
\label{sec:intro}

Analysis of streaming data with differential privacy (DP)~\cite{DMNS} has been studied from the initial days of the field~\cite{Dwork-continual,CSS11-continual}, and this has been followed up in a sequence of works that include computing simple statistics~\cite{perrier2018private}, to machine learning applications (a.k.a.~online learning)~\cite{JKT-online,thakurta2013nearly,agarwal2017price,kairouz2021practical,erlingsson2019private}. While all of these works focus on the abstract algorithmic design for various artifacts of streaming data processing, to the best of our knowledge, none of them focus on designing a scalable stream processing system. In this work, we primarily focus on designing a scalable DP stream processing system, called Differential Privacy SQL Pipelines (DP-SQLP), and make algorithmic advances along the way to cater to the scalability needs of it. DP-SQLP is implemented using a streaming framework similar to Spark streaming~\cite{Zaharia13streamingspark}, and is built on top of the Spanner database~\cite{corbett2013spanner} and F1 query engine~\cite{samwel2018f1} from Google. We also present production applications with two use cases in Section~\ref{sec:application}. The first is a real world use case that deploys DP-SQLP in~\emph{Google Shopping} to generate streaming page-view counts. The second applies the streaming DP algorithm to~\emph{Google Trends}.

In this paper we consider a data stream to be an unbounded sequence of tuples of the form $(\recordkey,$ $\recordvalue,$ $\recordtimestamp,$ $\recorduserid)$ that gets generated continuously in time. We also have a discrete set of times (a.k.a.~\emph{triggering times}) $\tr = [t^{\sf tr}_1, t^{\sf tr}_2,...,t^{\sf tr}_T]$. The objective is to output the sum of all the values for each of the keys at each time $t^{\sf tr}_i$, while preserving $(\epsilon,\delta)$-DP~\cite{DMNS} over the entire output stream with respect to all of the contributions with the same $\recorduserid$. Although most prior research has extended simple one-shot DP algorithms to the streaming setting \cite{Dwork-continual,CSS11-continual,cardoso2022differentially}, designing a scalable DP-streaming system using off-the-shelf algorithms is challenging because of the following reasons\footnote{Our work is most closely related to~\cite{cardoso2022differentially}. We defer a full comparison to Section~\ref{sec:related}.}:

\begin{enumerate}
    \item \mypar{Unknown key space} A data stream processing system can only process the data that has already arrived. For example, keys for a $\groupby$ operation are not known in advance; instead we discover new keys as they arrive. To ensure $(\epsilon,\delta)$-DP one has to ensure the set of keys for which the statistics are computed is~\emph{stable} to change of an individual user's data. That is, we can only report statistics for a particular $\recordkey$ when enough users have contributed to it; to ensure DP the threshold for reporting a key must be randomized.
    
    \item \mypar{Synchronous execution} The execution of the streaming system is driven by the data stream. That is, we must process the data as it arrives, and cannot run asynchronously at times when there is nothing to trigger execution.  We refer to the times when our system runs as \emph{triggering times}. Furthermore, typically, at each triggering time, only the keys that appeared since the last triggering time are processed. However, this is problematic for DP -- if we only output a key when it has appeared in the most recent event time window, then this potentially leaks information. Naively, to avoid this, one has to process all the keys at each triggering time, which is computationally prohibitive.
    
    \item \mypar{Large number of observed keys and events} A fundamental challenge is scalability. The system should be able to handle millions of updates per second from a data stream with billions of distinct keys. 
    
    \item \mypar{Effective user contribution bounding} In real applications, each person may contribute multiple records to the data stream at different times. Providing ``event-level DP'' (where a single action by a person/user is protected) does not provide sufficient privacy protection. In this work, we provide ``user-level DP'', where~\emph{all} the actions by a person/user is protected simultaneously. To provide user-level DP, one has to bound the contribution of each user, and that eventually introduces bias on the statistics that get computed. But contribution bounding controls the variance of the noise we must add to ensure DP. A natural (and unavoidable \cite{amin2019bounding,liu2022histogram, kamath2023bias}) challenge is to decide on the level of contribution bounding to balance this bias and variance.
    
    \item \mypar{Streaming release of statistics} One has to output statistics at every triggering time. If we treat each triggering time as an independent DP data release, then the privacy cost grows rapidly with the number of releases. Alternatively, to attain a fixed DP guarantee, the noise we add at each triggering time must grow polynomially with the number of triggering times. This is impractical when the number of triggering times is large. Thus the noise we add to ensure DP is not independent across triggering times. This helps in drastically reducing the total noise introduce for a fixed DP guarantee. 
\end{enumerate}

In our design of DP-SQLP we address these challenges, either by designing new algorithms, or by implementing existing specialized algorithms. This is our main contribution. To the best of our knowledge,~\emph{we provide the first at-scale differentially private stream aggregation processing system}.

\mypar{Motivation for DP-SQLP} Data streams appear commonly in settings like Web logs, spatio-temporal GPS traces~\cite{andres2013geo}, mobile App usages~\cite{latif2020introducing}, data generated by sensor networks and smart devices~\cite{liu2018epic}, live traffic in maps~\cite{wang2016differential}, cardiovascular monitoring~\cite{tan2021toward}, real-time content recommendation~\cite{phelan2009using}, and pandemic contact tracing~\cite{cho2020contact}. Almost all of these applications touch sensitive user data on a continual basis. \citet{Calandrino11-like} demonstrated that continuous statistic release about individuals can act as a strong attack vector for detecting the presence/absence of a single user (in the context of collaborative recommendation systems). Hence, it is imperative to a streaming system to have rigorous privacy protections. In this work we adhere to differential privacy. For more discussion on the type of streams we consider, see a detailed survey in~\cite{isah2019survey}.

\mypar{Our Contributions} As mentioned earlier, our main contribution is overcoming multiple challenges to build a distributed DP stream processing system that can handle large-scale industrial workloads. 

\begin{itemize}
    \item \mypar{Private key selection} A priori, the set of possible keys is unbounded, so our system must identify a set of relevant keys to track. To protect privacy, we cannot identify a particular $\recordkey$ based on the contributions of a single user.  The streaming setting adds two additional complications: (a) The existence of each $\recordkey$ is only known when it is observed in a data record, and (b) the privacy leakage due to continually releasing information about any particular key increases the DP cost due to composition~\cite{dwork2014algorithmic}. To address these challenges, we design a novel algorithm (Algorithm~\ref{Alg:key-selection}) that couples ``binary tree aggregation''~\cite{Dwork-continual,CSS11-continual,honaker2015efficient} (a standard tool for continual release of DP statistics that only accumulates privacy cost that is poly-logarithmic in the number of aggregate releases from the data stream) with a thresholding scheme that allows one to only operate on keys that appear in at least $\mu>0$ user records. To further minimize privacy leakage, we employ a variance reduced implementation of the binary tree aggregation protocol~\cite{honaker2015efficient}. 
    
    \item \mypar{Preemptive execution} Since we may track a large number of keys in production systems, it is not scalable to scan through all of the state keys each time the system is invoked. Thus we design a new algorithm (Algorithm~\ref{Alg:empty-key-prediction}) that only runs on the keys that have appeared between the current and previous triggering times.
    The idea is to \emph{predict} when a key will be released in advance, rather than checking at each triggering time whether it should be released now.
    That is, whenever we observe a given $\recordkey$, we simulate checking the release condition for the rest of triggering times assuming no further updates to $\recordkey$. In the future we only check for $\recordkey$ at any triggering time if either of the two conditions happen: (i) $\recordkey$ appears in a fresh microbatch in the data stream, or (ii) the earlier simulation predicted a release for that time. By doing so, we reduce the expensive I/O and memory cost, with little CPU overhead. This idea is motivated by the caching of pages in the operating systems literature. 
    
    \item \mypar{Empirical evaluation} We provide a thorough empirical evaluation of our system. We consider a few natural baselines that adopt one-shot DP algorithms to stream data processing (e.g., repeated differential privacy query). At $(\varepsilon = 6, \delta=10^{-9})$-DP, we observed up to 93.9\% error reduction, and the number of retained keys is increased by 65 times when comparing DP-SQLP with baselines. Through our scalability experiments, we show that DP-SQLP can handle billions of keys without incurring any significant performance hit. 
    
    \item \mypar{Industry application} We present two industry use cases of streaming differential privacy. The DP-SQLP is applied to Google Shopping that processes a very large scale data stream under production environment, which demonstrates the scalability of our approach. A differentially private product page-view count is generated by DP-SQLP that can be used to signal the product information update. In the second use case, the streaming private key selection is applied to Google Trends for analyzing popular search queries with differential privacy guarantee.
    
\end{itemize}

In the following, we formally define the problem, and delve deeper into relevant related works.

\subsection{Problem Statement}
Let $D$ be a data stream defined by an unbounded sequence of records, i.e., $D=[d_1, d_2,...)$, where each record $d_i$ is a tuple 
$(\recordkey,$ $\recordvalue,$ $\recordtimestamp t_i,$ $\recorduserid)$. A common query pattern in data analytics is the unknown domain histogram query. For example, consider a web service that logs user activities: each record is a URL click that contains $(\texttt{URL},$ $\recorduserid,$ $\recordtimestamp)$. An analyst wants to know the number of clicks on each URL for each day up to today. An SQL query to generate this histogram is presented in Listing~\ref{target-sql1}.

\begin{lstlisting}[
           language=SQL,
           showspaces=false,
           basicstyle=\ttfamily,
           numberstyle=\tiny,
           commentstyle=\color{gray},
           caption=Single histogram query,
           captionpos=b,
           label=target-sql1,
           frame=single
        ]
SELECT URL, 
       TO_DATE(timestamp) AS date, 
       COUNT(*) AS count
FROM web_logs
GROUP BY URL, TO_DATE(timestamp)
\end{lstlisting}

In Listing \ref{target-sql1}, the keys are $(\texttt{URL}, \texttt{date})$ denoted by $\recordkey$\footnote{Throughout the paper, $\recordkey$ and k are used interchangeably.}, and the $\texttt{count}$ is an aggregation column denoted by $m$. 

\begin{figure}
\centering
\includegraphics[width=0.3\textwidth]{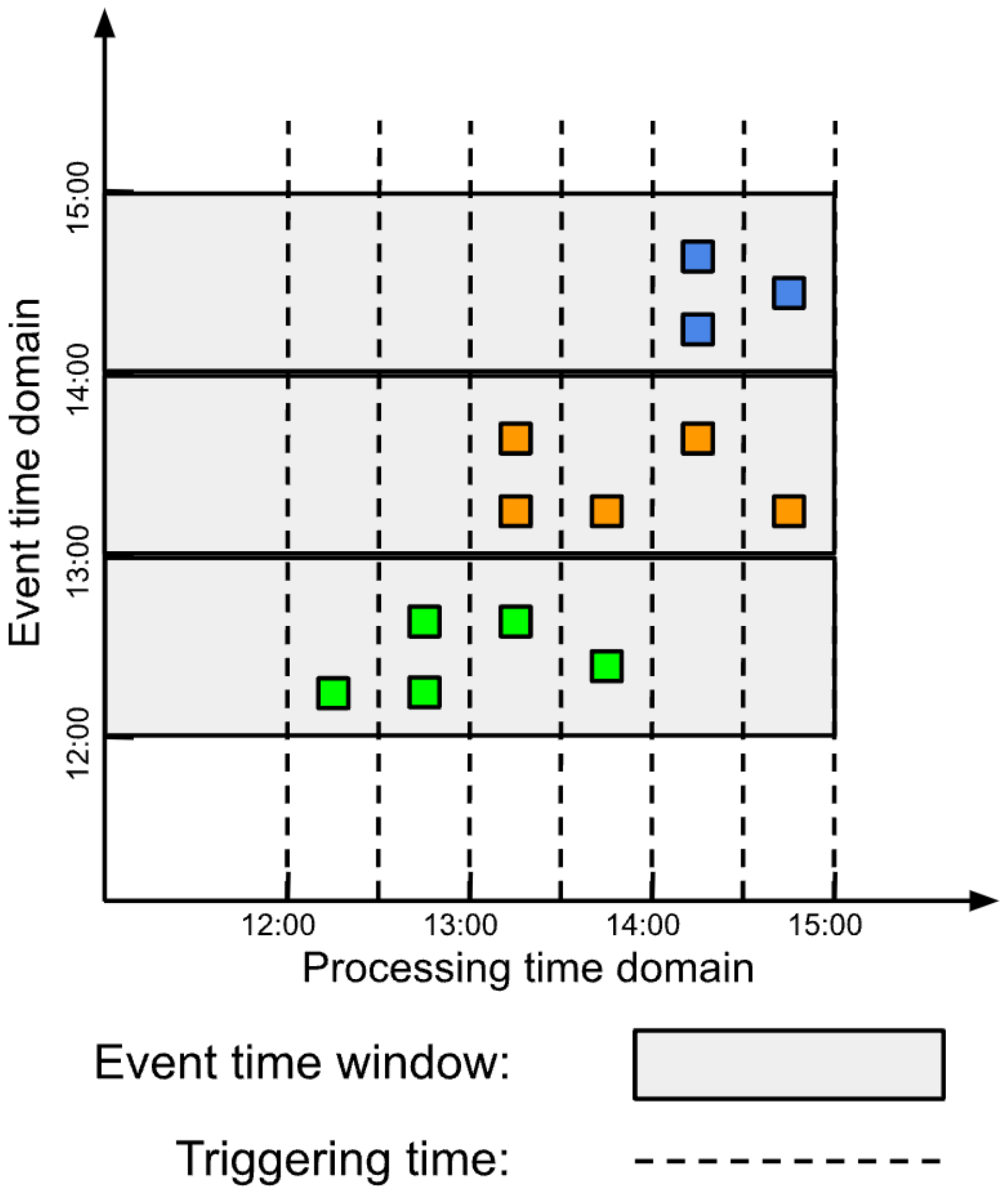}
\caption{Event time domain and processing time domain}
\label{fig:time-domain}
\end{figure}

When querying a growing database or data stream, the above-mentioned query only shows a snapshot at certain date. In stream data processing, we use event-time window to determine which records in the ~\emph{event time domain} to process, and triggers to define when in the ~\emph{processing time domain} the results of groupings are emitted \cite{akidau2015dataflow}.
Let $W$ denote the event-time windows, $W=[w_1, w_2...)$. Each event-time window is defined by a starting time and an end time $w_{i}=(t_{i,s}, t_{i,e})$. $D_{w_i} \in D$ contains all records that can be assigned to window $w_i$ so that the timestamp $t$ of each record satisfies $t_{i,s} \leq t < t_{i,e}$. Let $Tr$ denote a set of triggering times in the processing domain, $\tr = [tr_1, tr_2...)$. We assume triggering time is predefined and independent to dataset\footnote{In practical application, the streaming system may choose trigger adaptively, with complicated implementation~\citet{akidau2015dataflow}}. The streaming system incrementally processes $D_{w_i}$ at triggering time $\tr_{w_i} = \{tr_{i, s}, ..., tr_{i, e}\} \subset \tr$. Due to the time domain skew\cite{akidau2015dataflow}, $t_{i,s} \leq tr_{i,s}$ and $t_{i,e} \leq tr_{i,e} \leq t_{i, e} + t$, where $t$ is the maximum delay that system allows for late arriving records. Our goal is to release the histogram for all sub-stream $D_{w_{i}}$, at every triggering time $tr_i \in \tr_{w_{i}}$, in a differentially private (DP) manner (See Appendix~\ref{sec:dp-on-stream-appendix} for a formal DP definition). For a pictorial representations of various timing concepts, see Figure~\ref{fig:time-domain}.

\mypar{Privacy implication of input driven stream} In terms of privacy, we want to ensure that the stream processing system ensures $(\epsilon,\delta)$-user level DP~\cite{DMNS,ODO,dwork2014algorithmic} over the complete stream. Since we are operating under the constraint that the data stream is an input driven stream (Definition~\ref{def:input-driven}), the timings of the system (e.g., event time (Definition~\ref{def:event-time}), processing time (Definition~\ref{def:processing-time}), and triggering time (Definition~\ref{def:triggering-time})) can only be defined w.r.t. times at which the inputs have appeared. Thus it forces us to define the DP semantics which considers the~\emph{triggering times to be fixed} across neighboring data sets (in the context of traditional DP semantics). For a given user, what we protect via DP is the actual data that is contributed to the data stream. We provide a formalism in Appendix~\ref{sec:dp-on-stream-appendix}.

\begin{figure*}[ht]
\centering
\includegraphics[width=0.7\textwidth]{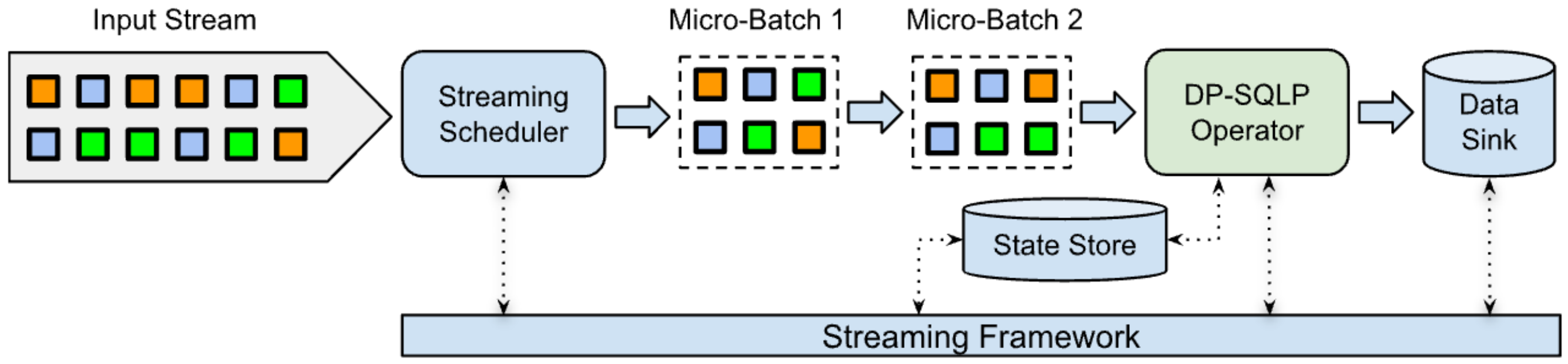}
\caption{High-level overview for the DP-SQLP system}
\label{fig:streaming-system}
\end{figure*}

\subsection{Related Work}
\label{sec:related}
Stream processing has been an active research field for more than 20 years~\cite{fragkoulis2020survey}. It is now considered to be a mature technology with various streaming frameworks deployed at scale in industry, including Spark Streaming~\cite{Zaharia13streamingspark}, Apache Beam~\cite{akidau2015dataflow} and Apache Flink~\cite{DBLP:CarboneKEMHT15flink}. However, none of these systems offer differentially private streaming queries.

Our work builds on a long line of DP research that focuses on extending one shot applications of DP mechanisms to the continual observation (streaming) setting for both analytics and learning applications \cite{Dwork-continual,CSS11-continual,honaker2015efficient, cardoso2022differentially, kairouz2021practical, li2015matrix, denisov2022improved, henzinger2022constant, epasto2023differentially}. These mechanisms crucially leverage the tree aggregation protocol \cite{Dwork-continual,CSS11-continual,honaker2015efficient} or variants of it based on the matrix factorization mechanism \cite{li2015matrix, denisov2022improved, henzinger2022constant}. All these approaches have the advantage of drastically reducing the error induced by repeated application of a DP mechanism, making the DP protocol itself stateful. 

Of the above cited, our work is most related to \cite{cardoso2022differentially}, which investigates the problem of computing DP histograms with unknown-domains under continual observations. They leverage (extensions of) the tree aggregation protocol to build an efficient DP protocol for the continual observation setting. Our key selection algorithm (Algorithm~\ref{Alg:key-selection}) is heavily inspired by~\cite{cardoso2022differentially}, with the main difference being the use of additional threshold $\mu$ for further privacy protection, and an algorithm to predict whether to select a key even if it has zero records (Algorithm~\ref{Alg:empty-key-prediction}). Algorithm~\ref{Alg:empty-key-prediction} is crucial to the scalability of our approach to production workloads. From a system design point, our work further extend~\cite{cardoso2022differentially} as follows:
\begin{enumerate}
    \item We consider user level DP. As outlined in the introduction, this introduces interesting algorithmic and system challenges.
    \item We provide a concrete streaming system architecture for scalable production deployments whereas they focus on developing algorithms. More precisely, we develop and test an empty key prediction scheme that allows us to scale to millions of updates per second that contain billions of distinct keys.
    \item We test our algorithms and architecture on a number of large-scale synthetic and real-life datasets to demonstrate the efficacy of our approach relative to meaningful baselines.
    \item Our system is deployed to a extreme large scale production use case.
\end{enumerate}

Another prior system that continuously releases streaming user count at scale is Google FLEDGE k-Anonymity server \cite{FLEDGE,epasto2023differentially}. It determines whether a given advertisement has been shown to at least k users over a sliding window with event-level differential privacy guarantee. While our algorithms and systems are more general that can be applied to arbitrary histogram query at scale, with a user-level guarantee. An extended list of real-world uses of DP is presented in~\citet{DamienBlog}.  

\subsection{Organization}

The rest of the paper is organized as follows. In Section~\ref{sec:preliminaries} we provide the necessary background on differential privacy and streaming systems; in Section~\ref{sec:sdp-overview} we describe the main algorithmic components of our DP streaming system; in Section~\ref{sec:streamingdpimp} we provide details of the improvements needed to scale up the algorithms to large workloads; in Section~\ref{sec:experiments} we provide a thorough experimental evaluation; and finally in Section~\ref{sec:conclusion} we provide some concluding remarks and outline a few interesting open directions. We also provide a glossary of terms from the streaming literature (used in this paper) in Appendix~\ref{sec:glossary}.

\section{Preliminaries}
\label{sec:preliminaries}

In this section, we describe the formalism that will be necessary for the rest of the paper: a) DP on Streams (Appendix~\ref{sec:dp-on-stream-appendix}), b) DP continual observation (Appendix~\ref{sec:dp-tree-appendix}), and c) System architecture for the streaming system (Section~\ref{sec:sysArch}). For the purpose of brevity, we will defer (a) and (b) to the appendix. For a comprehensive introduction to DP, please refer to~\cite{vadhan2017complexity}.





\subsection{Streaming System Architecture}
\label{sec:sysArch}

\begin{sloppypar}
The streaming differential privacy mechanism described in this paper can be generally applied to various streaming frameworks, including Spark Streaming\cite{Zaharia13streamingspark}, Apache Beam\cite{akidau2015dataflow} and Apache Flink\cite{DBLP:CarboneKEMHT15flink}. The DP-SQLP system we develop is implemented using a streaming framework similar to Spark Streaming\cite{Zaharia13streamingspark}, as shown in Figure~\ref{fig:streaming-system}.
\end{sloppypar}

The input data stream contains unordered, unbounded event records. The streaming scheduler will first assign each record to the corresponding event-time window $w$. Within each window, at every triggering timestamp $tr$, records are bundled together to create an immutable, partitioned datasets, called a~\emph{micro-batch}. After a micro-batch is created, it will be dispatched to the DP-SQLP operator for processing.

When processing a micro-batch, the DP-SQLP operator will interact with the system state store for a state update. Once the differentially private histogram is generated, it will be materialized to the data sink\footnote{Data sink is the storage system used to store and serve output data, including file systems and databases.}. Similar to Spark Streaming, our streaming framework provides consistent, "exactly-once" processing semantic across multiple data centers. In addition, the streaming framework also provides fault tolerance and recovery.

There are multiple ways to schedule micro-batches based on certain rules\cite{akidau2015dataflow}, like processing timer based trigger, data arrival based trigger and combinations of multiple rules. 

Based on the number of micro-batches received by operator at each time instance, we can also classify scheduling methods into two categories, sequential scheduling and parallel scheduling, as shown in the Figure~\ref{fig:streaming-schedule}. In sequential scheduling, input data stream is divided into a sequence of micro-batches. The operator will process one micro-batch at a time. Parallel scheduling is able to further scale up the pipeline, by allowing multiple micro-batches to be processed at the same time. The streaming scheduler will partition records by predefined key tuples, and create one micro-batch per key range.

\begin{figure}
\centering
\includegraphics[width=0.3\textwidth]{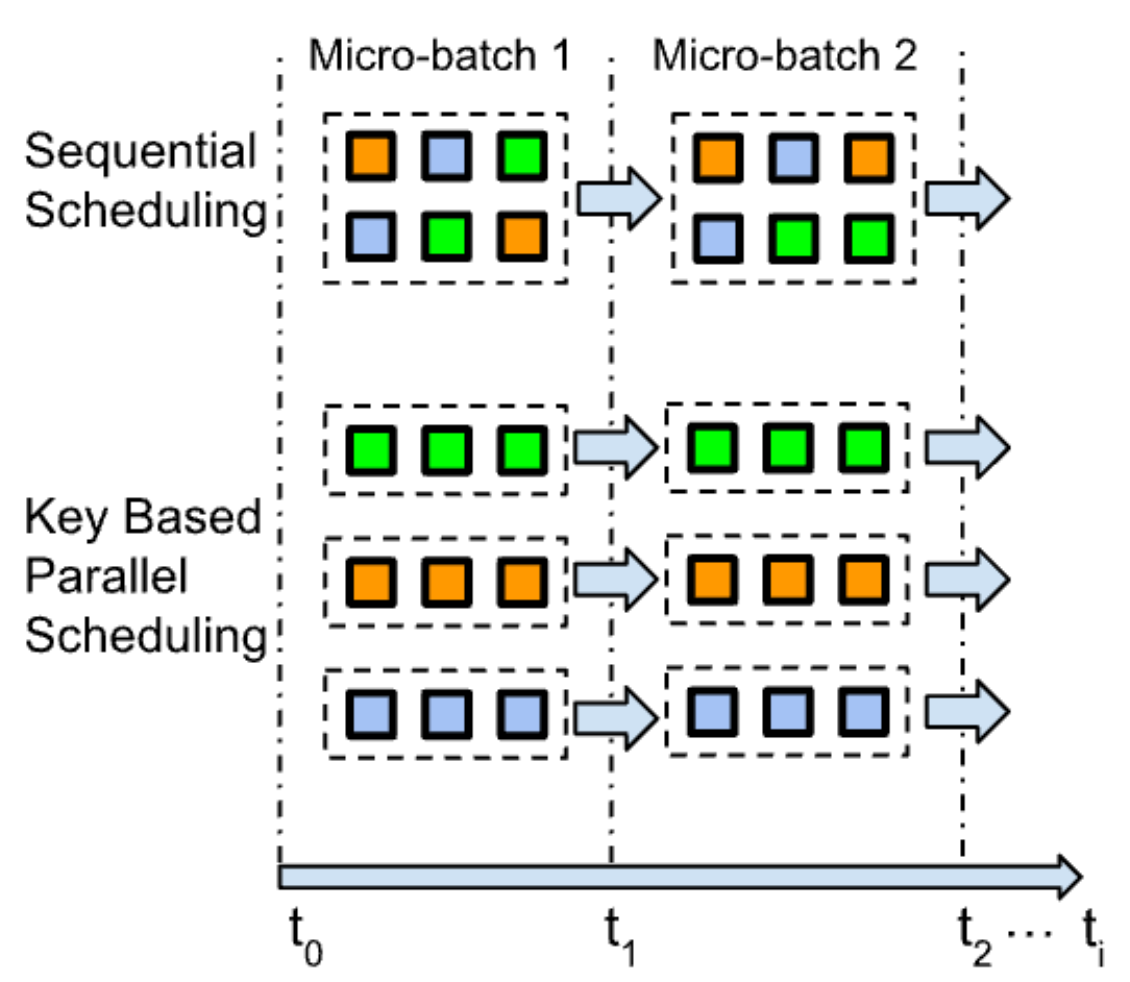}
\caption{Two types of streaming scheduling}
\label{fig:streaming-schedule}
\end{figure}

In the rest of paper, we will assume the sequential scheduling for the algorithm discussion. Given data stream $D_{w_i}$ for window $w_i$, the sub-stream at triggering timestamp $tr_i \in \tr_{w_i}$ can be denoted as $D_{tr_i} \subseteq D_{w_i}$, and the sub-stream for each micro-batch can be represented as the incremental data stream $\Delta D_{tr_i} = D_{tr_i} - D_{tr_{i-1}}$.

As Akidau pointed out in the unified dataflow model \cite{akidau2015dataflow}, batch, micro-batch, and pure streaming are implementation details of the underlying execution engine. Although the streaming differential privacy mechanism discussed in this paper is executed by a streaming system based on micro-batch, the mechanism and algorithm can be widely applied to batch, micro-batch, and pure streaming systems.

It is worth noting that different execution modes (batch, micro-batch and pure streaming) result in different trade-offs between data utility and pipeline latency. In differential privacy, the more frequent we repeat the process, the nosier results tend to be. Therefore, data utility is an additional factor when choosing execution modes.

\begin{figure*}[ht]
\centering
\includegraphics[width=0.8\textwidth]{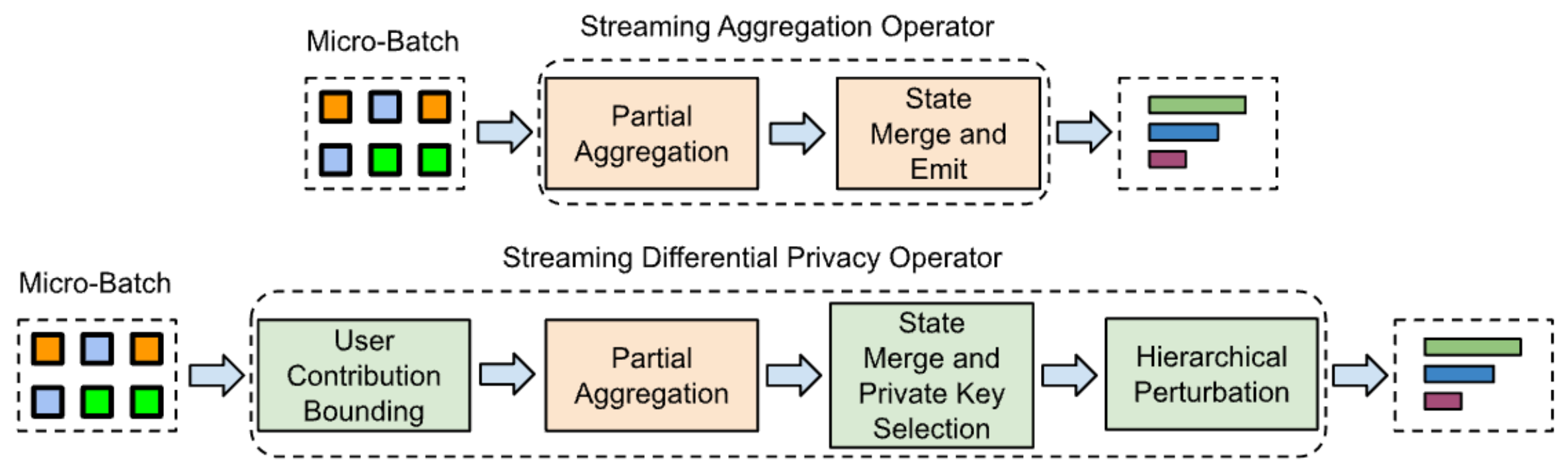}
\caption{Overview of streaming differential privacy mechanism}
\label{fig:sdp-overview}
\end{figure*}

\section{Streaming Private Mechanism}
\label{sec:sdp-overview}

In this section we will discuss the overall mechanism for streaming differential privacy. Our target is to perform aggregation and release histogram at every triggering timestamp in $\tr$, while maintaining $(\epsilon, \delta)$-\textit{differential privacy}. There are four main components within streaming differential privacy mechanism -- user contribution bounding, partial aggregation, streaming private key selection and hierarchical perturbation, as shown in Figure~\ref{fig:sdp-overview}. 

To simplify the discussion, we will assume $\Sum$ is the aggregation function for $\groupby$. It is also possible to use other aggregation function within streaming differential privacy mechanism.

Let's declare the inputs, parameters and outputs that will be used in streaming differential privacy.

\begin{itemize}
    \item Input: Data Stream $D$, event-time windows $W$, triggering timestamp per window $\tr_w$, privacy parameters $\epsilon, \delta > 0$, accuracy parameter $\beta > 0$, per record clamping limit $L$.
    \item System Parameters: Max. no. of records per user $C$.
    \item Output: Aggregated DP histogram at every triggering timestamp.
\end{itemize}

\subsection{Non-Private Streaming Aggregation}
The traditional streaming aggregation operator without differential privacy is shown on the top of Figure \ref{fig:sdp-overview}. Records within the micro-batch are grouped by key and aggregated by the \textit{reduce} function \cite{Zaharia13streamingspark}. After that, the partial aggregation result will be merged with the previous state \cite{to2018survey} and the update histogram is emitted. There is no differential privacy protection within this process, and user privacy can be leaked from multiple dimensions, including histogram value, aggregation key and the differences between two histogram updates.

\subsection{User Contribution Bounding}
\label{sec:user:contribbound}

DP algorithms require that the sensitivity of each user contributions be limited, for example that a user can contribute up to $C$ times. However, in reality, each user may contribute to many records and many keys, especially for heavy users. Therefore, we need to bound the maximum influence any user can have on the output in order to achieve a desired overall DP guarantee. This step is called ``user contribution bounding''.

Some one-shot mechanisms perform user contribution bounding by limiting contributed value per key and the number of contributed keys per user~\cite{amin2022plume,wilson2019differentially}. However, this approach does not fit the streaming setting, since it requires three shuffle stages - shuffle by user, shuffle by (key, user) and shuffle by key. 

User contribution bounding in streaming DP is performed on the user level for the entire data stream $D$\footnote{In a production streaming system, the DP guarantee is commonly defined with the minimum privacy protection unit (e.g., [user, day]). The maximum number of record per user $C$ needs to be enforced within each privacy unit.}:

\begin{itemize}
    \item Each user can contribute to at most $C$ records in the data stream $D$.
    \item The value $v$ for the aggregation column $m$ in each record is clamped to $L_m$ so that $|v| < L_m$.
\end{itemize}

The maximum number of records per user $C$ and per record clamping limit $L_m$ together determine the per-user $\ell_1$ sensitivity in data stream:

$$L_1 = C \times L_m.$$

Choosing the right contribution bounding parameters $C$ and $L_m$ is critical for privacy-utility trade-off. When the bounding limit is small, the noise is small, but the data loss may be significant (e.g. if most/all users have a lot of data to contribute). On the other side, when the bounding limit is large, the data loss is small, but noise is large. It is possible to find a near optimal point from a heuristic study, which we discuss in Section \ref{sec:experiments}. Indeed, one approach to choosing a good contribution bound is to inspect the data distribution. For example, we can pick $C$ at $99^{th}$ percentile of per-user records (i.e. $<1\%$ users have more than C records), which can be chosen in a DP way if computed on a fraction of the data stream, or in a non-DP way if it is based on proxy data.

In the following sections, we will introduce streaming private key selection and hierarchical perturbation, which are two main private operations in streaming differential privacy. Since the private operations are performed per window, we will focus on $D_{w_i}$ during algorithm discussion. However, the same private operations should be applied to all windows.

\subsection{Streaming Private Key Selection}
\label{sec:partkeyselection}

The main objective we consider in this section is to select the set of keys that exceed a certain threshold of user contributions $\mu\geq 0$. (These are the keys that are used later for releasing the aggregation columns in Section~\ref{sec:hierarchical} below.) Recall that our streaming system is input driven by a growing data stream, meaning, one only sees the existence of a key if it has at least one user contribution. This poses a significant privacy challenge since the detectable set of keys is highly dependent on the data set. As a result, we design a novel thresholding scheme coupled with binary tree aggregation~\cite{Dwork-continual,CSS11-continual,honaker2015efficient} that allows one to only operate with keys that deterministically have at least $\mu$ user contributions, and still preserve $(\epsilon,\delta)$-DP. In the following, we provide a description of the algorithm, along with the privacy analysis.

\mypar{Data preprocessing} After user contribution bounding (discussed in Section~\ref{sec:user:contribbound}), we first perform a regular key \groupby and aggregation for all records within the current micro-batch\footnote{This is a system level operation without any implication to the privacy guarantee.}. Then we merge the aggregated data on each key with a data buffer that is stored in a system state\footnote{For every key, we accumulate aggregation column values, as well as the number of unique users. This process is called data accumulation.}. Beyond that we execute the key selection algorithm described in Algorithm~\ref{Alg:key-selection}.

\mypar{Algorithm description} As mentioned earlier, the emitted key space is not predefined. A key may emerge when there is at least one user record with the key (due to the nature of input driven stream). Therefore, the streaming DP system must determine \emph{when} and \emph{what} keys to release or update, in a private manner. This is different from the non-private streaming aggregation, where updates will be emitted at every triggering timestamp after processing each micro-batch.

\mypar{Remark} In the description of Algorithm~\ref{Alg:key-selection}, we use the following primitives implicitly used in Algorithm~\ref{Alg:abcd2}: i) \\$\init(T,\sigma)$: Initialize a complete binary tree $\tree$ with $2^{\lceil T\rceil}$ leaf nodes with each node being sampled from $\calN(0,\sigma^2)$, ii) $\addt(\tree,i,c_i)$: Add $c_i$ to all the nodes on the path from the $i$-th leaf node to the root of the tree, and iii) $\gettotal(\tree_\recordkey,i)$: Prefix sum of the all the inputs $\{c_1,\ldots,c_i\}$ to the binary tree computed via Algorithm~\ref{Alg:abcd2}.

Our approach is an extension of thresholding algorithm in \cite{korolova2009releasing} to the streaming setting. In short, we carefully select a threshold and compute (with DP noise) the number of unique users that contribute to each encountered key. If this noisy number is greater than or equal to the chosen threshold, the key is released. We describe the algorithm in full detail in Algorithm~\ref{Alg:key-selection}, and provide the formal privacy guarantee in Theorem~\ref{thm:privKeySelection}.

A crucial component of the algorithm is the choice of the threshold $\tau$ in Line~\ref{line:thresh} of Algorithm~\ref{Alg:key-selection}. One can instantiate $\tau$ with the bound in Theorem~\ref{thm:utilGuaranteeTree}. However, in our implementation (described in Section~\ref{sec:streamingdpimp}) we actually implement the tree aggregation via the ``bottom-up Honaker'' variance reduction described in Appendix~\ref{sec:dp-tree-appendix}. One can write the exact distribution of the differences between DP-Tree aggregated count and the true count, which is $\hat{q}_{{tr_{i},k}}-\countt_k(D_{tr_{i}})$ in  Line~\ref{line:thresh} of Algorithm~\ref{Alg:key-selection}, via~\eqref{eq:distTree}. This allows us to get a tighter bound on $\tau$ based on the inverse CDF of the Gaussian distribution. Also, it should be obvious from \eqref{eq:distTree} that the variance of the Gaussian distribution is dependent on the time step at which we are evaluating the cumulative sum. Hence, to obtain a tighter estimation of the threshold, we actually have a time dependent threshold $\tau_{tr_{i}}$ (based on~\eqref{eq:distTree}) instead of an universal threshold in Line~\ref{line:thresh}.

\mypar{Remark} For brevity, in Theorems~\ref{thm:privKeySelection} and~\ref{thm:utilTreeKey}, we provide the guarantees assuming each user only contributes once, i.e., in the language of Section~\ref{sec:user:contribbound} we assume that that user contribution bound $C=1$. However, in our implementation we do allow $C>1$. The idea is to use a tighter variant of advanced composition for $(\epsilon,\delta)$-DP~\cite{dwork2014algorithmic} while ensuring that each user contributes~\emph{at most once to each key} in any instance of Algorithm~\ref{Alg:key-selection}. In the following we provide the privacy and the utility guarantees. For brevity, we defer the proofs to Appendix~\ref{sec:theorem31-proof-appendix} and Appendix~\ref{sec:theorem32-proof-appendix}.

\begin{algorithm}[ht]
\caption{Streaming Private Key Selection}
\begin{algorithmic}[1]
\REQUIRE Data stream $D_{w_i}=\{d_1,\ldots,d_n\}$, where the event timestamp $t_i$ of each $d_i$ can map to event-time window $w_i=(t_s, t_e)$, $t_s < t_i < t_e$, triggering timestamps $\tr_{w_i} = [tr_{1}, tr_{2},...,tr_{T_i}]$. At each triggering time $tr_{i} \subseteq \mathbb{R}$, only a sub-stream $D_{tr_{i}} \subseteq D_{w_i}$ is available. Threshold $\thresh \ge 0$, privacy parameters $\epsilon,\delta > 0$, failure probability $\beta >0$.
{\STATE Compute the noise standard deviation $\sigma$ for the tree aggregation based on $(T_i=|\tr_{w_i}|, \epsilon, \delta)$. (See Appendix~\ref{app:key-selection} for more details.)\label{line:keySelection1}}
{\STATE Compute the accuracy threshold $\tau$ of the tree aggregation such that for any fixed $\recordkey$ and the corresponding binary tree $\tree_\recordkey$, $$\pr{\tree_\recordkey}{\forall tr_{i}\in\tr_w, |\hat{q}_{{tr_{i},\recordkey}}-\countt_\recordkey(D_{tr_{i}})| \le \tau} \ge 1-\beta,$$ which depends on $\sigma$ and $\beta$. (See Appendix~\ref{app:key-selection} for more details.) Here, $\countt_\recordkey(D_{tr_{i}})$ denote the unique user count for $\recordkey$ in $D_{tr_{i}}$, and $\hat{q}_{{tr_{i},\recordkey}}$ denote the private estimate of $\countt_\recordkey(D_{tr_{i}})$.\label{line:thresh}}
\FOR{$i\in|\tr_{w_i}|$}
    \STATE $\calS^{(i)}\leftarrow$ Set of all keys in the stream $D_{tr_{i}}$ with count $>\mu$.
    \STATE For all $\recordkey\in\calS^{(i)}\backslash\calS^{(i-1)}$, create a new tree $\tree_{\recordkey}$ using $\init(T_i,\sigma)$, and execute Algorithm~\ref{Alg:abcd2} till $(i-1)$-th step with all zeros as input. 
    \FOR{$\recordkey\in\calS^{(i)}$}
        \STATE $\tree_\recordkey \leftarrow \addt (\tree_\recordkey,i, \countt_\recordkey(D_{tr_{i}}) -$ $\countt_\recordkey(D_{tr_{i-1}}))$, i.e., Add the count at time stamp $tr_{i}$  to $\tree_{\recordkey}$.
        \STATE $\hat{q}_{{tr_{i},\recordkey}} \gets \gettotal(\tree_\recordkey,i)$.
        {\STATE {\bf if} $\hat{q}_{{tr_{i},\recordkey}} > \mu + \tau$, {\bf then } output $(\recordkey,\hat{q}_{{tr_{i},\recordkey}})$.\label{line:selection-continue}}
    \ENDFOR
\ENDFOR
\end{algorithmic}
\label{Alg:key-selection}
\end{algorithm}

\begin{thm}[Privacy guarantee]
    Algorithm \ref{Alg:key-selection} is $(\varepsilon, \delta + (e^\varepsilon+1)\cdot\beta)$-DP for addition or removal of one element of the dataset.
    \label{thm:privKeySelection}
\end{thm}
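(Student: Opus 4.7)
The plan is to couple Algorithm~\ref{Alg:key-selection} (call it $M$) with an idealized algorithm $\tilde M$ that is easier to analyze. First, I define the accuracy event $E = \{|\hat{q}_{tr_i,\recordkey}-\countt_\recordkey(D_{tr_i})| \le \tau\ \text{for every key}\ \recordkey\ \text{and every time}\ tr_i\}$; by the construction of $\tau$ in Line~\ref{line:thresh}, $\Pr[E] \ge 1-\beta$. Let $\tilde M$ be the variant of $M$ in which the data-dependent pre-filter ``count $>\mu$'' in Line~4 is removed, so that $\tilde M$ runs the binary-tree mechanism for \emph{every} key that ever appears in the stream. I plan to show separately that (i) $\tilde M$ is $(\varepsilon,\delta)$-DP, and (ii) $M$ and $\tilde M$ produce identical outputs on the event $E$; the theorem then follows by a standard ``indistinguishability up to $\beta$'' coupling argument.

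For (i), the $C=1$ assumption is the workhorse: the single contributing user $u$ distinguishing $D$ from $D'$ touches at most one key $k^*$. I couple the Gaussian samples drawn by $\init$ across the two runs. For every key $\recordkey \ne k^*$, the count-increment stream fed into its tree is identical under $D$ and $D'$, so the trees and the associated release decisions agree pointwise. For $k^*$, the two count streams differ by exactly $+1$ at a single leaf, which is precisely the sensitivity-$1$ setting of Theorem~\ref{thm:privGuaranteeTree}; with $\sigma$ chosen in Line~\ref{line:keySelection1} to make the entire tree-output vector $(\varepsilon,\delta)$-DP, and since the threshold check $\hat{q} > \mu+\tau$ is deterministic post-processing, the combined output of $\tilde M$ is $(\varepsilon,\delta)$-DP. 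For (ii), on event $E$ any key released by $\tilde M$ has $\hat{q} > \mu+\tau$, which under $E$ forces the true $\countt_\recordkey > \mu$, so $M$'s pre-filter also admits the key and (with the coupled noise) produces the same release; conversely, for any key whose count stays $\le\mu$ throughout, $E$ implies $\hat{q} \le \countt_\recordkey + \tau \le \mu+\tau$ (using $\mu \ge 0$ from the input), so $\tilde M$ never releases it either.

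The hard part is step (i) in the case where $k^*$ exists in $D'$ but is absent from $D$ (because $u$ is the sole contributor): a direct DP analysis of $M$ is awkward, since $M$ never even instantiates a tree for $k^*$ in $D$. The abstraction $\tilde M$ sidesteps this obstacle by pretending the tree is run on the all-zero count stream in $D$, so that the sensitivity-$1$ guarantee of Theorem~\ref{thm:privGuaranteeTree} applies uniformly across all keys. Combining (i) and (ii) via the standard coupling inequality---first swapping $M(D)$ for $\tilde M(D)$ at a cost of $\beta$ in probability, then applying $(\varepsilon,\delta)$-DP of $\tilde M$, then swapping $\tilde M(D')$ back for $M(D')$ at an additional cost of $e^\varepsilon\beta$---yields the claimed $(\varepsilon,\,\delta+(e^\varepsilon+1)\beta)$-DP guarantee.
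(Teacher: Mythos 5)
Your overall architecture matches the paper's proof: compare Algorithm~\ref{Alg:key-selection} to an idealized mechanism that drops the data-dependent pre-filter, argue the idealized mechanism is $(\varepsilon,\delta)$-DP by post-processing of the tree aggregation, and transfer the guarantee back through the standard coupling chain that produces the $(e^\varepsilon+1)\beta$ term. However, there is a genuine gap in how you instantiate the coupling. You define $\tilde M$ by removing the ``count $>\mu$'' filter for \emph{every} key, and you then need the accuracy event $E$ to hold simultaneously for all keys, asserting $\Pr[E]\ge 1-\beta$ ``by the construction of $\tau$.'' But Line~\ref{line:thresh} calibrates $\tau$ so that accuracy holds with probability $1-\beta$ \emph{for each fixed key and its own tree}; the simultaneous event over all keys only satisfies $\Pr[E]\ge 1-K\beta$ by a union bound, where $K$ is the number of distinct keys ever observed (potentially billions, or unbounded). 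Since $M$ and your $\tilde M$ can differ on any key whose true count never exceeds $\mu$ but whose tree noise pushes $\hat q$ above $\mu+\tau$ (in $M$ no tree is even created for such a key, while $\tilde M$ may spuriously release it), your coupling only yields $\Pr[M(D)\ne\tilde M(D)]\le K\beta$, and the argument as written proves $(\varepsilon,\ \delta+(e^\varepsilon+1)K\beta)$-DP rather than the claimed bound.

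The fix is the reduction you already invoke in step (i) but do not apply where it is actually needed: since the neighboring streams differ in a single record and $C=1$, only one key $k^*$ is affected, and the algorithm's behavior on every key $k\ne k^*$ is identical under $D$ and $D'$ and independent of the $k^*$ component (each key has its own tree with independent noise). The hybrid should therefore modify \emph{only the $k^*$ coordinate}: let $A'$ agree with $M$ on all keys $k\ne k^*$ and run the unfiltered tree mechanism on $k^*$ alone, exactly as in the paper. Then $M$ and $A'$ can differ only when the single tree $\tree_{k^*}$ has error exceeding $\tau$, which by Line~\ref{line:thresh} happens with probability at most $\beta$, and the rest of your inequality chain goes through to give the stated $(\varepsilon,\ \delta+(e^\varepsilon+1)\beta)$-DP guarantee.
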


\begin{thm}[Utility guarantee]
For any fixed key $k$, there exists a threshold $\tau=\bigOh{\frac{\sqrt{\log(T/\beta)\log^2(T)\log(1/\delta)}}{\epsilon}}$ such that w.p. at least $1-\beta$, Algorithm~\ref{Alg:key-selection} outputs $k$ if at any one of the triggering time (in $\tr = [tr_{1}, tr_{2},...,tr_{T}]$) the true count is at least $\mu+\tau$ .
\label{thm:utilTreeKey}
\end{thm}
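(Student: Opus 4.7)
The plan is to leverage the concentration of the tree-aggregation noise to argue that whenever the true cumulative count $\countt_k(D_{tr_i})$ exceeds $\mu+\tau$ at some triggering time, the noisy estimate $\hat{q}_{tr_i,k}$ crosses the release threshold in Line~\ref{line:selection-continue}, so Algorithm~\ref{Alg:key-selection} emits $k$. The proof naturally splits into two steps: first pin down the Gaussian noise scale $\sigma$ forced by the DP budget, and second turn that scale into a uniform accuracy bound on $\hat{q}_{tr_i,k}-\countt_k(D_{tr_i})$ across all $T$ triggering times.

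I would first compute $\sigma$ from privacy. Since $C=1$ and we are counting distinct users, addition or removal of any one user contributes sensitivity $L=1$ to the input of the tree. By Theorem~\ref{thm:privGuaranteeTree}, initializing $\tree_k$ with $\calN(0,\sigma^2)$ per node yields $\frac{\lceil\lg T\rceil}{2\sigma^2}$-zCDP, which converts to $(\epsilon,\delta)$-DP once we take $\sigma=\bigOh{\sqrt{\log T\cdot\log(1/\delta)}/\epsilon}$. Next, for any fixed $tr_i$, the error $\hat{q}_{tr_i,k}-\countt_k(D_{tr_i})$ output by $\gettotal(\tree_k,i)$ is a sum of at most $\lceil\log T\rceil$ independent $\calN(0,\sigma^2)$ noises (one per tree level traversed from the $i$-th leaf to the root; if the Honaker reduction of \eqref{eq:HonackerVar} is used, the per-node variance shrinks only by a constant factor so the asymptotics are unchanged). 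Hence its variance is at most $\sigma^2\lceil\log T\rceil$. A standard Gaussian tail bound together with a union bound over the $T$ triggering times then shows that, with probability at least $1-\beta$, $\sup_i|\hat{q}_{tr_i,k}-\countt_k(D_{tr_i})|\le c\,\sigma\sqrt{\log T\cdot\log(T/\beta)}$ for some absolute constant $c$. Substituting $\sigma$ recovers the claimed order $\tau=\bigOh{\sqrt{\log(T/\beta)\,\log^2(T)\,\log(1/\delta)}/\epsilon}$, which is exactly the accuracy threshold promised in Line~\ref{line:thresh}.

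The final step combines this accuracy bound with the release rule. Conditioned on the high-probability event above, if $\countt_k(D_{tr_i})\ge \mu+\tau$ at some $tr_i$ (with the constant inside $\bigOh{\cdot}$ for $\tau$ chosen so that the algorithm's release threshold in Line~\ref{line:selection-continue} is at most $\mu$ plus a strictly smaller constant multiple of $\tau$, a harmless rescaling absorbed into the $\bigOh{\cdot}$), then $\hat{q}_{tr_i,k}\ge\countt_k(D_{tr_i})-\tau$ exceeds the release threshold, so Algorithm~\ref{Alg:key-selection} releases $k$ at time $tr_i$ and the utility claim follows.

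The main obstacle, and the source of the exact polylogarithmic factors in the stated bound, is juggling the two layers of $\log T$: each single-time error is already a sum of $O(\log T)$ Gaussians, and we must uniformize this bound over $T$ triggering times, producing the $\sqrt{\log^2 T\cdot\log(T/\beta)}$ factor once combined with $\sigma$. Getting the constants to line up between the DP-forced $\sigma$, the accuracy bound $\tau$, and the algorithm's release threshold is routine but must be done carefully so that the test in Line~\ref{line:selection-continue} fires whenever the true count clears $\mu+\tau$ on the good event.
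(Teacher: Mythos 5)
Your proposal is correct and follows essentially the same route as the paper: a uniform accuracy bound for the tree-aggregation error (Gaussian tails over the $O(\log T)$ noises per query, union-bounded over the $T$ triggering times, exactly the content of Theorem~\ref{thm:utilGuaranteeTree}) combined with the privacy-calibrated $\sigma=\bigOh{\sqrt{\log T\,\log(1/\delta)}/\epsilon}$, yielding the stated $\tau$. The only difference is that you spell out the final margin bookkeeping between the release test in Line~\ref{line:selection-continue} and the accuracy radius (a constant-factor rescaling of $\tau$), which the paper leaves implicit.
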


\subsection{Hierarchical Perturbation}
\label{sec:hierarchical}

\begin{algorithm}[ht]
\caption{Hierarchical Perturbation with DP-Tree}
\begin{algorithmic}[1]
\REQUIRE Data stream: $D_{w_i}=\{d_1,\ldots,d_n\}$, where each $d_i$ arrive at time $t_i$ within event-time window $w_i=(t_s, t_e)$, $t_s < t_i < t_e$. Triggering timestamps $\tr_{w_i} = [tr_{1}, tr_{2},...,tr_{T_i}]$. At each triggering timestamp $tr_{i} \subseteq \mathbb{R}$, only a sub-stream $D_t \in D_{w_i}$ is available. Privacy parameters $\epsilon,\delta > 0$, number of $\DpTree$ leaf nodes $n$.
\STATE Compute the noise standard deviation $\sigma$ for the tree aggregation based on $(n, \epsilon, \delta)$. (See Appendix~\ref{app:key-selection} for more details.)
\FOR{$i\in|\tr_{w_i}|$}
    \STATE $\calS_i\leftarrow$ Set of keys output by the key selection algorithm (Algorithm~\ref{Alg:key-selection}) at $tr_{i}$. 
    \FOR{$\recordkey\in\calS_i$}
        \STATE Let $D_{tr_{i}}\leftarrow$ data stream available at time stamp $tr_{i}$.
        \STATE Last Release Time $\lrt_\recordkey\leftarrow$ triggering timestamp of the previous statistic release.
        \STATE $\Delta V_\recordkey\leftarrow$ Aggregated value for $\recordkey$ in the sub-stream $D_{tr_{i}}-D_{\lrt_\recordkey}$.
        \STATE $\tree_\recordkey\leftarrow\addt(\tree_\recordkey,i,\Delta V_\recordkey)$.
        \STATE  Output $\gettotal(\tree_\recordkey,i)$.
    \ENDFOR
\ENDFOR
\end{algorithmic}
\label{Alg:dptree-aggregation}
\end{algorithm}

Once sufficient records of certain key are accumulated (i.e., following the notation from the previous section there are at least $\mu$ unique user contributions), the objective is to select the key for \emph{statistic release}. Statistic release corresponds to adding the value from user contributions to a main histogram that estimates the distribution of records across all the keys. Notice that in this histogram, a single user can contribute multiple times to the same key. In this section we discuss how to create this histogram while preserving DP. 

The crux of the algorithm is that for all the set of keys detected during the key selection phase via Algorithm~\ref{Alg:key-selection}, we maintain a DP-tree (an instantiation of Algorithm~\ref{Alg:abcd2}) for every key detected. We provide a $\rho$-zCDP guarantee for each of the trees for each of the keys. Since each user can contribute $C$ records in this phase, and in the worst case all these contributions can go to the same node of a single binary tree, we scale up the sensitivity corresponding to any single node in the tree to $L_1=C\cdot L$ (analogous to that in Section~\ref{sec:user:contribbound}), and ensure that each tree still ensures $\rho$-zCDP. We provide the details of the algorithm in Algorithm~\ref{Alg:dptree-aggregation}. The privacy guarantee follows immediately from Theorem~\ref{thm:privGuaranteeTree}, and the translation from $\rho$-zCDP to $(\epsilon,\delta)$-DP guarantee. In Algorithm~\ref{Alg:dptree-aggregation}, we will use a lot of the binary tree aggregation primitives we used in Section~\ref{sec:partkeyselection}.

\section{System Implementation and Optimization}
\label{sec:streamingdpimp}

\begin{figure*}[ht]
\centering
\includegraphics[width=\textwidth]{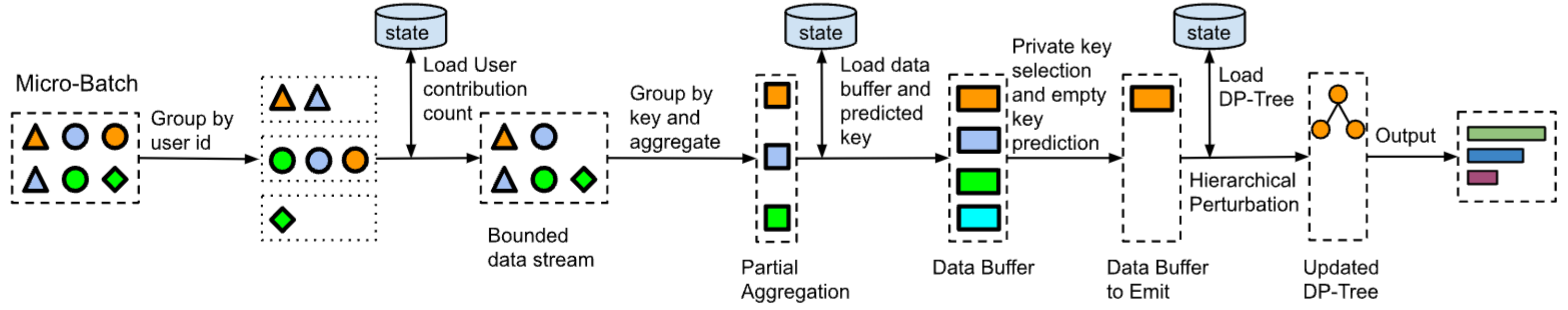}
\caption{Execution of streaming differential privacy mechanism}
\label{fig:sdp-implementation}
\end{figure*}

When implementing the streaming differential privacy algorithms described in Section \ref{sec:sdp-overview}, one must take the system constraints into practical considerations. There are three main challenges:

\begin{itemize}
    \item The streaming framework described in Section~\ref{sec:sysArch} discretizes the data stream into micro-batches. Therefore, streaming key selection and hierarchical perturbation, whose algorithms are defined based on data stream $D_{t^{\tr}_i}$, must be implemented using micro-batch $\Delta D_{t^{\tr}_i}$ (defined in Section~\ref{sec:sysArch}) and the system state. In Section~\ref{sec:statemanagement} we detail the complete state management of DP-SQLP.
    \item DP-SQLP is input data stream driven. The state loading and updating require the existence of a key in the current micro-batch. However, Algorithm ~\ref{Alg:key-selection} requires to test all keys that have appeared at least once. In Section~\ref{sec:emptyKey} we discuss a new algorithm that avoids testing all the keys that have appeared at least once.
    \item There are multiple components to the DP-SQLP system which are individually $(\epsilon,\delta)$-DP. It is necessary to use appropriate forms of composition to account for the total privacy cost. In Section~\ref{sec:accounting}, we detail the complete privacy accounting for DP-SQLP.
\end{itemize}

\subsection{State Management}
\label{sec:statemanagement}

As mentioned above, both streaming key selection and hierarchical perturbation are defined based on data stream $D_{t^{\tr}_i}$. Therefore, they both require stateful operations. Furthermore, the global user contribution bounding that tracks the number of records per-user in data stream $D$ is also stateful.

In DP-SQLP, the system state store is a persistent storage system, backed by Spanner database~\cite{corbett2013spanner} that provides high availability and fault tolerance. The system state store is co-managed by the DP-SQLP operator and the streaming framework for state update and maintenance. All the state information required by streaming differential privacy is stored in the system state store. For a pictorial depiction, see Figure~\ref{fig:streaming-system}.

\mypar{State Store Structure} There are two main state tables in the system state stores, which are managed by the same streaming framework. Each state table is a key value storage containing ~\emph{state key} and ~\emph{state object}.

The first state table is keyed by user id to track per-user contribution within the data stream. The state object simply stores the count value.

The second state table is keyed by \groupby keys. The state object contains data buffer, \DpTrees for key selection and \DpTrees for aggregation columns. Data buffer is a data structure that temporarily stores the unreleased, aggregated data from new records, due to the failure in thresholding test (line~\ref{line:selection-continue}, Algorithm~\ref{Alg:key-selection}). One \DpTree is used by each round of Algorithm~\ref{Alg:key-selection} execution, and one \DpTree is used for hierarchical perturbation per aggregation column.

\mypar{Execution Procedures} The execution of streaming differential privacy mechanism is shown in Figure~\ref{fig:sdp-implementation}. Different shapes represent different users and different colors represent different keys. Each step is described as following.

\begin{enumerate}
    \item \textbf{User contribution bounding}: Records in one~\emph{micro-batch} are grouped by user id. A map in~\emph{system state store} is maintained to track the number of records each user contributes. Once the number of  contributions for a user reaches $C$, all the remaining records for that user in the data stream will be discarded. Furthermore, we clamp the value $v$ of each~\emph{aggregation column} $m$, so that $|v| \leq L_m$.
    \item \textbf{Cross-user aggregation}: Records in one~\emph{micro-batch} are grouped by key and aggregated, which form a delta result~\cite{fegaras2016incremental}. After that, the delta result will be merged into the data buffer that is loaded from the system state store. 
    \item \textbf{Streaming key selection}: The \DpTrees for streaming key selection are loaded from the system state store. Then we will perform Algorithm~\ref{Alg:key-selection}, which adds the incremental user count from the current micro-batch into \DpTree, as a leaf node.
    \item \textbf{Hierarchical perturbation}: Once a key is selected, the \DpTree for hierarchical perturbation is loaded from the system state store. After that, we will use Algorithm~\ref{Alg:dptree-aggregation} to get DP aggregation results, and output the results.
\end{enumerate}

The execution engine used to implement user contribution bounding and hierarchical perturbation will be discussed in section~\ref{sec:engine}.

As mentioned in section~\ref{sec:partkeyselection}, the \DpTree estimator is implemented with the “bottom-up Honaker” variance reduction to get the DP sum. The estimated sum for \DpTree root at ${\sf node}_{i^*}$ equals 
$$
\hat\Sum({\sf node}_{i}) = \sum\limits_{j=0}^{\mu-1} c_j\cdot {\sf sum}(\level_j).
$$
In case more than one \DpTrees are used in key selection or hierarchical perturbation, we need to further sum the Honaker estimations from each tree together.

\subsection{Parallel Execution}
\label{sec:engine}
\begin{figure}[ht]
\centering
\includegraphics[width=0.3\textwidth]{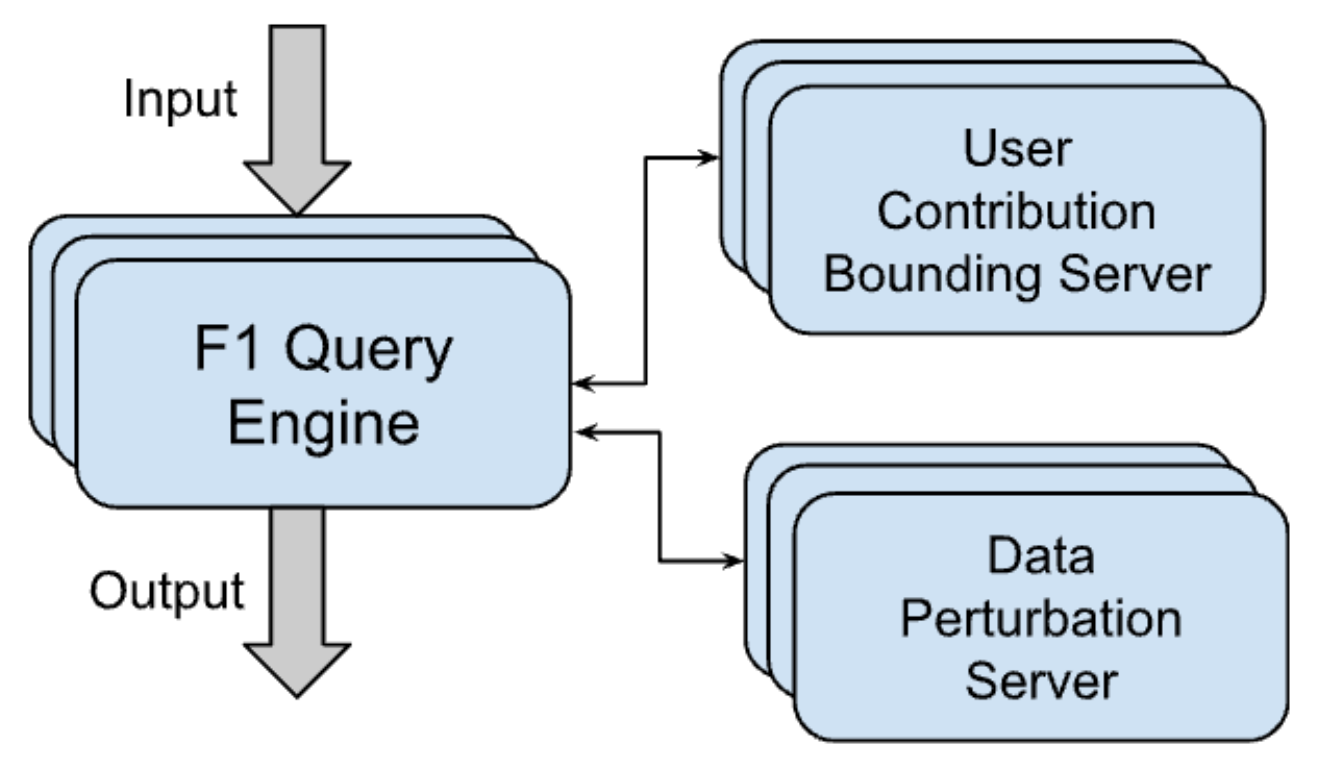}
\caption{Parallel execution within DP-SQLP operator}
\label{fig:execution}
\end{figure}

When building the DP-SQLP operator, we leverage the F1 query engine~\cite{samwel2018f1} for its wide range of data sources and distributed query execution (Figure~\ref{fig:execution}). The user contribution bounding step is executed by the~\emph{user contribution bounding server}. The privacy key selection and hierarchical perturbation are executed by the~\emph{data perturbation server}. Both servers contain thousands of workers that are horizontally scalable. Input data is first read by F1 query engine, partitioned, then sent to the user contribution bounding server through Remote Procedure Calls (RPCs). After that, the bounded data will stream back to F1, re-partitioned, then being sent to data perturbation server for key selection and hierarchical perturbation.

\subsection{Empty Key Release Prediction}
\label{sec:emptyKey}
Since the data stream is unordered and unbounded, the existence of user contributions within each micro-batch can be arbitrary, as shown in Figure~\ref{fig:zero-key}. It is possible that some keys do not have any user records in a micro-batch. In the traditional streaming systems, states are not updated unless new records appear in the micro-batch\cite{to2018survey}. Therefore, the system only needs to load states with keys from the current micro-batch.

\begin{sloppypar}
However, the streaming key selection algorithm (Algorithm~\ref{Alg:key-selection}) requires us to perform the thresholding test for the entire key space, and it is possible for a key to be selected without new records. An naive approach is to load all the keys with their associated states from the system state store, and run Algorithm~\ref{Alg:key-selection} directly. Unfortunately, when the key space is large, the I/O cost and memory cost of loading the entire state table is too high. Here, we propose the empty key release prediction algorithm, together with other operational strategies, to solve the scalability challenge.
\end{sloppypar}

\begin{algorithm}[ht]
\caption{Empty Key Release Prediction}
\begin{algorithmic}[1]
\REQUIRE Data stream: $D_w=\{d_1,\ldots,d_n\}$, where each $d_i$ arrive at time $t_i$ within event-time window $w=(t_s, t_e)$, $t_s < t_i < t_e$. Triggering timestamps $\tr_w = [tr_{1}, tr_{2},...,tr_{T}]$. At each triggering timestamp $tr_{i} \subseteq \mathbb{R}$, only a sub-stream $D_t \in D_w$ is available. privacy parameters $\epsilon,\delta > 0$. Let j be the current round of key selection.
\STATE $\Delta D_{tr_{j}} \leftarrow D_{tr_{j}} - D_{tr_{j-1}}$ (which is the data stream for the micro-batch at $tr_{j}$.
\STATE $\calS_{tr_{j}}\leftarrow$ All $\recordkey \in \Delta D_{tr_{j}}$ selected via Algorithm~\ref{Alg:key-selection} at time $tr_{j}$.
\FOR{$\recordkey \in \calS_{tr_{j}}$}
    \FOR{$tr_{p}$ \textbf{from} $tr_{j+1}$ \textbf{to} $tr_{|\tr_w|}$}
            \STATE $D_{tr_{p}} \leftarrow D_{tr_{j}}$, which mimic the data stream at the next triggering timestamp $tr_{p}$, without any new record.
            \STATE Perform streaming private key selection on $D_{tr_{p}}$ for $\recordkey$ via Algorithm~\ref{Alg:key-selection} at time $tr_{p}$.
            \STATE {\bf if} $\recordkey$ is selected, {\bf then} write $\left(\recordkey, tr_p\right)$ to state store and {\bf break}.
        \ENDFOR
\ENDFOR
\end{algorithmic}
\label{Alg:empty-key-prediction}
\end{algorithm}

\mypar{Algorithm description for empty key release prediction} There are two scenarios that may trigger a key being selected: key is selected due to additional user contributions, or key is selected due to noise addition without user contributions. The first scenario is naturally handled by the streaming system, when processing micro-batch with new records. The secondary scenario is handled by Algorithm~\ref{Alg:empty-key-prediction}.

When a micro-batch $\Delta D_{tr_{j}}$ contains key $k$, the streaming private key selection algorithm for key $k$ is applied to the sub-stream $D_{tr_{j}}$. In case $k$ is not selected, we will simulate streaming private key selection algorithm executions from $tr_{j+1}$ to $tr_{|\tr|}$, using sub-stream $D_{tr_{j}}$, and predict if any future release is possible by adding leaf node with~\emph{zero} count. The predicted releasing time is $tr_{p}$, and it is written to the system state store.

After making a release prediction, the DP-SQLP will continue to process the next micro-batch. For key $k$ with predicted releasing time $tr_{p}$, there are two cases:

\begin{figure}[ht]
\centering
\includegraphics[width=0.3\textwidth]{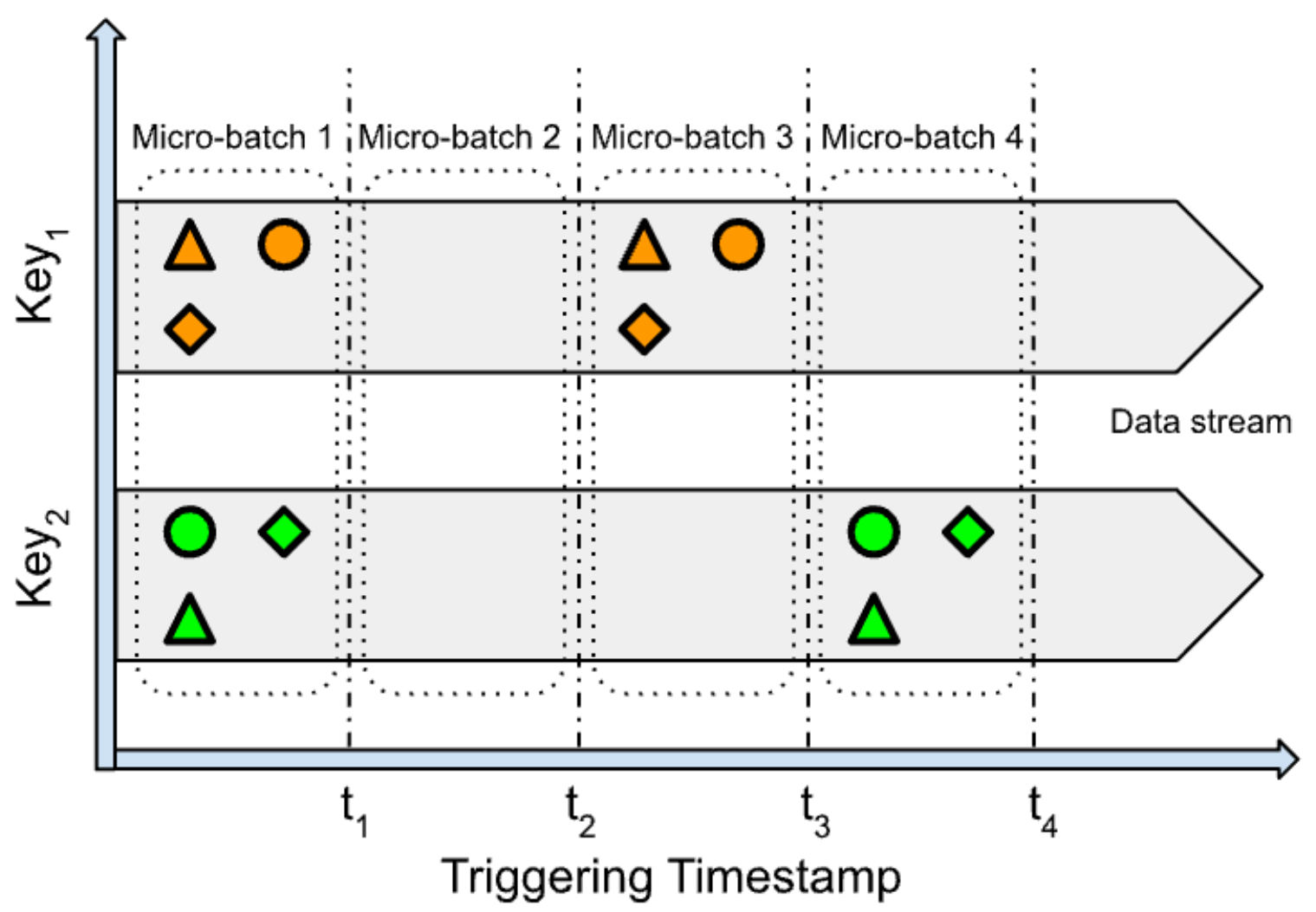}
\caption{Some key may not have records within certain micro-batch}
\label{fig:zero-key}
\end{figure}

\begin{enumerate}
    \item $k$ appears in another micro-batch $\Delta D_{tr_{n}}$ before the predicted triggering timestamp ($j < n < p$). In this case, the prior prediction result is discarded. We will perform key selection algorithm for the micro-batch $\Delta D_{tr_{n}}$. All the thresholding test for micro-batches that do not have $k$ between $tr_{j+1}$ and $tr_{n-1}$ have been performed during the prediction phase in the prior micro-batch $D_{tr_{j}}$. In addition, we will make a new prediction for $\Delta D_{tr_{n}}$.
    \item $k$ appears in another micro-batch $\Delta D_{tr_{n}}$ after the predicted triggering timestamp ($p < n$). In this case, DP-SQLP loaded system states for $k$ at the predicted time $tr_{p}$ and released data from the data buffer. We will start a new round of streaming key selection from micro-batch $\Delta D_{tr_{n}}$.
\end{enumerate}
Within these operations, some computations might be wasted (e.g., Case 1). However, when the key space is large, the reductions in I/O cost and memory cost bring in more benefits than the CPU overhead.

The prediction result is stored in the system state. The step to load predicted results is shown in Figure \ref{fig:sdp-implementation}. We also add a secondary index on the predicted timestamp to improve the state loading speed.

\subsection{Privacy Accounting for DP-SQLP}
\label{sec:accounting}
In DP-SQLP, the privacy costs occur in streaming key selection (Algorithm~\ref{Alg:key-selection}) and hierarchical perturbation (Algorithm~\ref{Alg:dptree-aggregation}). Because each user is allowed to contribute at most $C$ records, we use the combination of composition and sensitivity in privacy accounting.

\begin{itemize}
    \item \emph{Privacy accounting for streaming key selection:} When executing Algorithm~\ref{Alg:key-selection} in DP-SQLP, the \emph{value} added to each leaf node is the \emph{unique user count}. Therefore, the per-user sensitivity for each \DpTree is one. In addition, we restart the Algorithm~\ref{Alg:key-selection} once a key is selected, and data accumulated in the data buffer is released immediately. As a result, each user may participate in at most $C$ rounds of key selection. Given the $(\epsilon,\delta)$ privacy budget for each round of Algorithm~\ref{Alg:key-selection}, the total privacy cost for streaming key selection is calculated using the empirically tighter variant of advanced composition~\cite{dwork2014algorithmic} with $C$-fold.

    \item \emph{Privacy accounting for hierarchical perturbation:} For each user, in the worst case, all $C$ contributions can go to the same node of a single \DpTree, we scale up the sensitivity corresponding to any single node in the tree to $L_1=C\cdot L$, and ensure that each tree still ensures $\rho$-zCDP. After that, we use the conversion from~\cite[Proposition 3]{mironov2017renyi} to translate privacy cost from $\rho$-zCDP to $(\epsilon,\delta)$-DP guarantee~\footnote{The exact computation is from~\url{https://github.com/IBM/discrete-gaussian-differential-privacy/blob/master/cdp2adp.py\#L123}}.
\end{itemize}

Finally, the privacy costs of key selection and hierarchical perturbation are combined via advanced composition.

\section{Experiments}
\label{sec:experiments}

The experiments are performed using both synthetic and real-world data to demonstrate the data utility and scalability. The streaming DP mechanism is implemented in the DP-SQLP operator, as described in section~\ref{sec:streamingdpimp}.

\mypar{Baselines} We compare DP-SQLP with two baseline approaches for data utility.

\begin{itemize}
    \item Baseline 1 - Repeated differential privacy query. Most of the existing DP mechanisms does not have capacity to track user contributions across multiple queries. Therefore, when handling data streams, one common workaround is to repeatedly apply the one-shot DP query to the growing data set in order to get the histogram update. Thus, the overall privacy budget usage is the composition of all queries.
    \item Baseline 2 - Incremental differential privacy processing. The one-shot differential privacy algorithm is applied separately to each micro-batch, and we can get the final result by aggregating the outputs from each micro-batch. Compared with baseline 1, baseline 2 requires a similar global user contribution bounding system as DP-SQLP.
\end{itemize}
For both baseline 1 and baseline 2, the one-shot differential privacy mechanism is executed by Plume \cite{amin2022plume} with the Gaussian mechanism. Each one-shot differential privacy execution guarantees $(\epsilon, \delta)$-differential privacy. We also adopt the optimal composition theorem for DP \cite{KOV17} to maximize the baseline performance.

\mypar{Metrics} The data utility is evaluated based on 4 metrics calculated between the ground truth histogram and the differentially private histogram. 1)  Number of retained keys, which reflects how many keys are discovered during key selection process. It is also known as the $\ell_0$ norm, 2) $\ell_\infty$-norm, which reflects the worst case error: $\max\limits_{k \in\texttt{key space}} (|\hat M_k - M_k|)$, 3) $\ell_1$-norm, which reflects the worst case error: $\sum\limits_{k \in\texttt{ key space}} (|\hat M_k - M_k|)$, and 4) $\ell_2$-norm: $\sqrt{\sum\limits_{k \in\texttt{key space}} (|\hat M_k - M_k|^2)}$. 


We choose $\epsilon = 6$ and $\delta = 10^{-9}$ as the overall privacy budget for all experiments. Within DP-SQLP, the privacy budget used by the aggregation column is $\epsilon_m=\epsilon/2$, $\delta_m=\delta/3$, and the ones used by key selection is $\epsilon_k=\epsilon/2$, $\delta_k=\delta\times2/3$. The parameter \textit{C} is chosen based on the dataset property. In this experiment, we sampled $10\%$ of one day's data and set $C$ according to the $99$ percentile of per user number of records. There are more discussions on choosing $C$ in Section~\ref{sec-param-tuning}.

For both synthetic data and real-world data, We assume the dataset represents a data stream within one day. We also shuffle users' records so that they are randomly distributed within the day. In experiments, the event-time window is also fixed to one day.

In addition to data utility, we also report the performance latency under various micro-batch sizes and number of workers.

\subsection{Synthetic Data}

We used the similar approach as \cite{amin2022plume} to generate the synthetic data to capture the long-tailed nature of real-world data. There are 10 millions unique users in the synthetic dataset. Each user draws a number of contributed records from a distribution with range $[1, 10^5]$ and mean 10 according to a Zipf-Mandelbrot distribution. The parameters\footnote{In Zipf-Mandelbrot, the sampling probability is proportional to $(x+q)^{-s}$, where $q=26$, $s=6.738$.} are chosen so that roughly $15\%$ users contribute to more than 10 records. The key in each record is also sampled from a set of size $10^6$, following a Zipf-Mandelbrot distribution\footnote{$q=1000$, $s=1.4$.}. This implies that roughly $1/3$ of records have the first $10^3$ keys.

The histogram query task we perform is a simple count query. 

\begin{lstlisting}[
           language=SQL,
           showspaces=false,
           basicstyle=\ttfamily,
           numberstyle=\tiny,
           commentstyle=\color{gray},
           caption=Histogram query for synthetic data,
           captionpos=b,
           label=synthetic-sql,
           frame=single
        ]
SELECT key, COUNT(*)
FROM SyntheticDataset
GROUP BY key
\end{lstlisting}

The per-record clamping limit $L=1$ since the aggregation function is \textit{COUNT}. We set $C=32$.

All measurements are averaged across 3 runs. The experiments are performed with 100 micro-batches and 1000 micro-batches. Within one day, 100 micro-batches correspond to roughly 15 minute triggering intervals, and 1000 micro-batches correspond to roughly 1.5 min triggering intervals. 

\begin{table}[h!]
\caption{Data utility measure with synthetic data $(\epsilon=6, \delta=10^{-9})$}
\centering
\begin{tabular}{|l|r|r|r|}
\hline
 \multirow{2}{*}{Metrics} & \multicolumn{3}{c|}{100 Micro-batches} \\
 \cline{2-4}
 & DP-SQLP & Baseline 1 & Baseline 2 \\
 \hline
 Keys & 28,338 & 435 & 191\\
 $\ell_\infty$ Norm & 1,391 & 18,077 & 21,913\\
 $\ell_1$ Norm & 17,741,225 & 50,835,203 & 58,551,587\\
 $\ell_2$ Norm & 50,039 & 430,547 & 576,425\\
 \hline
\end{tabular}

\vspace{2mm}
\begin{tabular}{|l|r|r|r|}
\hline
 \multirow{2}{*}{Metrics} & \multicolumn{3}{c|}{1000 Micro-batches} \\
 \cline{2-4}
 & DP-SQLP & Baseline 1 & Baseline 2 \\
 \hline
 Keys & 22,280 & 0 & 0\\
 $\ell_\infty$ Norm & 1,563 & 25,497 & 25,497\\
 $\ell_1$ Norm & 19,395,721 & 59,052,062 & 59,052,062\\
 $\ell_2$ Norm & 58,237 & 594,382 & 594,382\\
 \hline
\end{tabular}
\label{table:syntheticU10M}
\end{table}

The results are shown in Table~\ref{table:syntheticU10M}. There are significant data utility improvements comparing DP-SQLP with two baselines. With 100 micro-batches, the number of retained keys is increased by \textbf{65} times; the worst case error is reduced by \textbf{92\%}; the $\ell_1$ norm is reduced by \textbf{65.1\%} and the $\ell_2$ norm is reduced by \textbf{88.4\%}. The utility improvement is even more significant with 1000 micro-batches. The number of retained keys is increased from 0 to 22,280; the worst case error is reduced by \textbf{93.9\%}; the $\ell_1$ norm is reduced by \textbf{67.2\%} and the $\ell_2$ norm is reduced by \textbf{90.2\%}.

Another observation we have for DP-SQLP is its stability when the number of micro-batches increases. The utility of one-shot differential privacy mechanisms in baseline 1 and baseline 2 degrade quickly, due to the privacy budget split (baseline 1) and data stream split (baseline 2). This degradation sometimes is not linear. The number of retained keys in baseline 1 and 2 is reduced to 0 when the number of micro-batches grows from 100 to 1000. On the contrary, the utility degradation for DP-SQLP is not as significant. Indeed, when the number of micro-batches increases from 100 to 1000, for DP-SQLP, the number of retained keys is reduced by 21\%, the worst case error is increased by 12\%, the $\ell_1$ norm is increased by 9\%, and the $\ell_2$ norm is increased by 16\%.

In summary, DP-SQLP shows a significant utility improvement over one-shot differential privacy mechanisms when continuously generating DP histograms.

\subsection{Reddit Data}


In the next step, we apply the same experiment to real-world data. Webis-tldr-17-corpus~\cite{syed2017webis} is a popular dataset consisting of 3.8 million posts associated with 1.4 million users on the discussion website Reddit. Our task is to count the user participation per subreddit (specific interest group on Reddit).

We set $C=17$ and the rest of the experiment settings are the same as the synthetic data. All measurements are averaged across 3 runs.

\begin{table}[h!]
\caption{Data utility measure with the Reddit data $(\epsilon=6, \delta=10^{-9})$}
\centering
\begin{tabular}{ |l|r|r|r|  }
 \hline
 \multirow{2}{*}{Metrics} & \multicolumn{3}{c|}{100 Micro-batches} \\
 \cline{2-4}
 & DP-SQLP & Baseline 1 & Baseline 2 \\
 \hline
 Keys & 1,473 & 32 & 63\\
 $\ell_\infty$ Norm & 102,250 & 267,147 & 103,546\\
 $\ell_1$ Norm & 989,249 & 2,721,349 & 2,376,937\\
 $\ell_2$ Norm & 127,721 & 322,739 & 156,472\\
 \hline
\end{tabular}

\vspace{2mm}
\begin{tabular}{ |l|r|r|r|  }
 \hline
 \multirow{2}{*}{Metrics} & \multicolumn{3}{c|}{1000 Micro-batches} \\
 \cline{2-4}
 & DP-SQLP & Baseline 1 & Baseline 2 \\
 \hline
 Keys & 1,181 & 9 & 3\\
 $\ell_\infty$ Norm & 102,218 & 266,391 & 108,124\\
 $\ell_1$ Norm & 1,074,724 & 3,081,542 & 3,074,655\\
 $\ell_2$ Norm & 127,830 & 341,557 & 242,482\\
 \hline
\end{tabular}

\label{table:reddit}
\end{table}

The results are summarized in Table~\ref{table:reddit}, and we have similar observations as in the synthetic data experiments. DP-SQLP demonstrates significant utility improvements in the number of retained keys, $\ell_1$ norm and $\ell_2$ norm, as well as the performance stability when the number of micro-batches grows from 100 to 1000.

\begin{figure}[ht]
\centering
\includegraphics[width=0.3\textwidth]{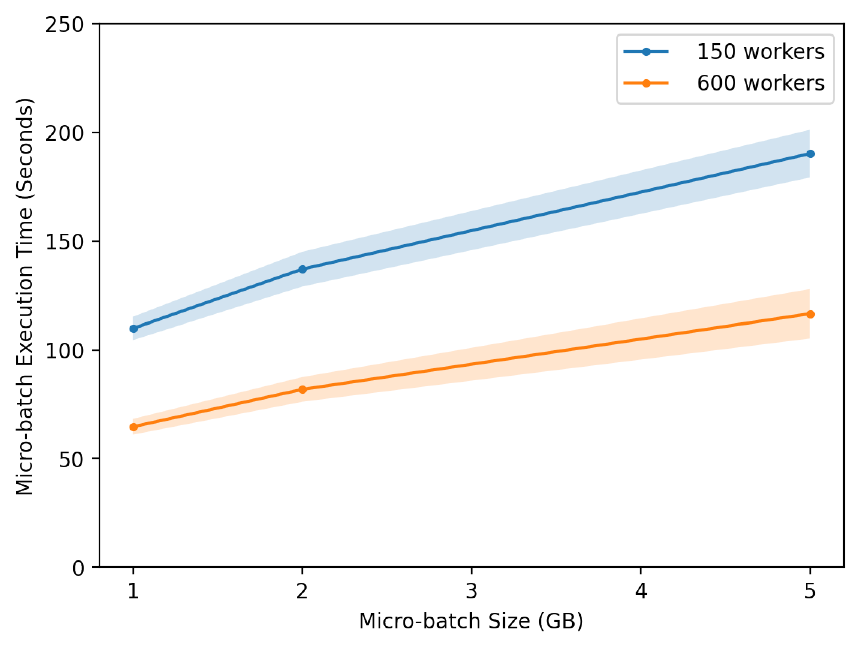}
\caption{Micro-batch execution time $(\epsilon=6, \delta=10^{-9})$}
\label{fig:latency}
\end{figure}

\subsection{Execution Performance}
The end-to-end latency of a record consists of framework latency and micro-batch execution latency. The former is determined by the streaming framework. The latter is a critical indicator for the system scalability. In this section, we report the execution latency for each micro-batch, under different micro-batch sizes and number of workers.

The results are shown in Figure~\ref{fig:latency}. All measurements are averaged across 2 runs, with shaded regions representing standard error. The execution latency grows sub-linearly as the micro-batch size increases. For example, with 150 workers, the execution latency grows 1.7 times while the data size increases 5 times from 1 GB to 5 GB. 

Figure~\ref{fig:latency} also demonstrates horizontal scalability by trading machine resources with latency. When the total number of workers increases from 150 to 600, the execution latency is reduced by 41\%, 40\%, and 39\% respectively for 1, 2, and 5 GB micro-batches.

To further test the scalability in terms of the size of key space, we generate another large synthetic dataset with 1 billion users. Each user draws the number of contributed records following Zipf-Mandelbrot distribution\footnote{$q=26$, $s=6.738$.}, generating 6 billion records in total. The key in each record is sampled from 1 billion keys following uniform distribution. When setting the micro-batch size equals 1GB and using 5500 workers, the average execution latency is 306 seconds. DP-SQLP easily handles the large load without incurring any significant performance hit.

\subsection{Parameter Tuning}
\label{sec-param-tuning}

Tuning user contribution bounding is critical to achieve good data utility. If $C$ is too small, lots of user records may be dropped due to user contribution bounding, which will lead to large histogram error. However, if $C$ is very large, the noise and key selection threshold are scaled up accordingly. Choosing the right $C$ is an optimization task.

In this section, we run the DP-SQLP with synthetic data using variable $C$ from $1$ to $50$. Figure~\ref{fig:param-tune} shows how the change of $C$ affects the number of retained keys, $\ell_1$ norm, $\ell_2$ norm and $\ell_\infty$ norm. The optimal value for $C$ (naturally) varies under different metrics. For example, the optimal $C$ for $\ell_1$ norm is around 25 whereas the optimal $C$ for $\ell_2$ norm is around 30. In comparison, the optimal $C$ for the number of retained key is around 5. Therefore, the optimal value of $C$ should be chosen according to the metric we care most about (e.g., $\ell_2$ norm). In real applications, we could use the $P99$ percentile value or DP $P99$ percentile value from a data sample as the starting point and perform a few rounds of tests to search for the optimal point.

\begin{figure}[ht]
\centering
\includegraphics[width=0.47\textwidth]{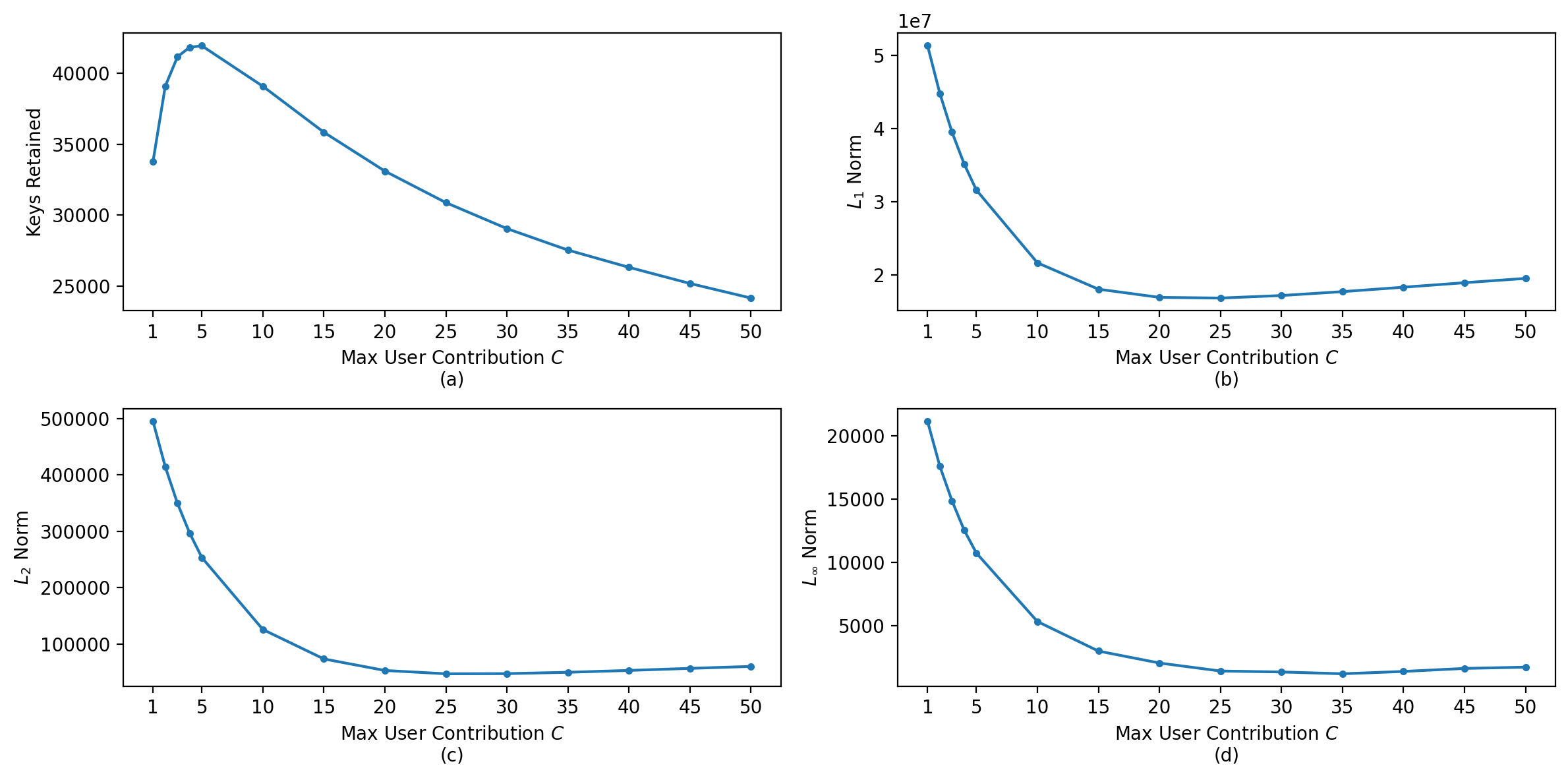}
\caption{Metrics under different contribution limit $(\epsilon=6, \delta=10^{-9})$}
\label{fig:param-tune}
\end{figure}

\section{Industry Applications}
\label{sec:application}
\subsection{Apply DP-SQLP to Google Shopping}
\label{sec:shoppingcase}
We implemented a streaming differentially private user impressions for Google Shopping. A DP-SQLP pipeline was deployed that outputs differentially private product page-view counts. The page-view was used by some shopping systems as a signal for prioritizing the crawling of pages to update product price and availability information. When it comes to use cases such as price and availability, data freshness is critical for a good user experience. Long pipeline latency in case of using batch process, may incur obsolete search results. Therefore, a streaming pipeline with low latency is required by Google Shopping.


To maintain a low latency stream of page-view counts, while addressing the privacy risks, Google Shopping set up a DP-SQLP pipeline that outputs a histogram of view count per merchant product page. The DP-SQLP pipeline processes around 20GB/s data stream, and maintains a 20 mins end-to-end latency.

In terms of data utility after adopting DP-SQLP, we were able to retain 59\% of the page-view. When focusing on the head pages, utility increases to 75\% for pages with an average view rate of 1 view/hour, and to 99.9\% for pages with an average view rate of 60 views/hour. When comparing noised impression counts with the raw counts, the relative error is around 11\%. Each user is bound to contribute one event per day to ensure user level DP guarantee, per day. We use $\epsilon=1$ for streaming aggregation. For streaming key selection, when choosing $\epsilon=1$, the equivalent threshold is around 120; when choosing $\epsilon=3$, the equivalent threshold is around 43; and when choosing $\epsilon=10$, the equivalent threshold is around 16. The overall $\delta=10^{-9}$. The max triggering window size is 150.

\subsection{Streaming DP in Google Trends}
\label{sec:trends}
Google Trends allows users to analyze the interest of search queries. The streaming private key selection (algorithm~\ref{Alg:key-selection}) is applied to it for selecting common Google Search queries with differential privacy in a streaming manner. Only queries chosen with differential privacy guarantee are shown on Google Trends website (e.g. as trending queries or related queries). The streaming pipeline ensures the 15 min end to end latency for highly searched queries to be selected with DP. Each user is bounded to contribute one event per query. We use $\epsilon=2$ and $\delta=10^{-10}$ for a user-query level DP guarantee, i.e. each query has a budget $\epsilon=2$ and $\delta=10^{-10}$. In addition, a deterministic pre-threshold $\mu=50$ is used, to yield a stronger privacy protection. That means that DP streaming selection is applied only for queries with at least 50 unique users. 
\section{Conclusion and Future Work}
\label{sec:conclusion}

In this paper, we presented a streaming differentially private system (DP-SQLP) that is designed to continuously release DP histograms. We provide a formal $(\epsilon, \delta)$-user level DP guarantee for arbitrary data streams. In addition to the algorithmic design, we implemented our system using a streaming framework similar to Spark streaming, Spanner database, and F1 query engine from Google. The experiments were conducted using both synthetic data and Reddit data. We compared DP-SQLP with two baselines, and the results demonstrated a significant performance improvement in terms of data utility. In the end, we present two industry applications that apply DP-SQLP and the streaming private key selection algorithm to the production use cases.

There are three main ways in which our system can be further extended.
First, in the design of the system we have used DP-tree aggregation~\cite{CSS11-continual,Dwork-continual,honaker2015efficient} as the baseline DP algorithm. In recent research~\cite{henzinger2023almost,denisov2022improved}, it has been shown that DP-tree aggregation is (significantly) sub-optimal in terms of privacy/utility trade-off, compared to general matrix factorization based mechanisms (DP-MF)~\cite{matousek2014factorization}. However, DP-MF algorithms are not in general compatible with systems operating on data streams. Recently, following to our work,~\cite{mcmahan2024efficient} provided a streaming variant of DP-MF. In future incarnations of our system, we plan to incorporate this.

Second, our algorithms are primarily designed to provide a centralized DP guarantee, where the final outcome of the system is guaranteed to be DP. It is worth exploring DP streaming system designs that allow stronger privacy guarantees like pan-privacy~\cite{dwork2010pan}.

Third, we bound the contribution of each user globally by $C$. However, for higher fidelity, it is important to explore approaches to perform per-key contribution bounding. Naive approaches that address this issue can get complicated due to the fact that we are dealing with an input driven stream.
\begin{acks}
We would like to thank Olaf Bachmann, Wei Hong, Jason Peasgood, Algis Rudys, Daniel Simmons-Marengo, Yurii Sushko and Sergei Vassilvitskii for the discussions and support to this project.
\end{acks}

\bibliographystyle{ACM-Reference-Format}
\bibliography{reference}


\begin{thebibliography}{53}


\ifx \showCODEN    \undefined \def \showCODEN     #1{\unskip}     \fi
\ifx \showDOI      \undefined \def \showDOI       #1{#1}\fi
\ifx \showISBNx    \undefined \def \showISBNx     #1{\unskip}     \fi
\ifx \showISBNxiii \undefined \def \showISBNxiii  #1{\unskip}     \fi
\ifx \showISSN     \undefined \def \showISSN      #1{\unskip}     \fi
\ifx \showLCCN     \undefined \def \showLCCN      #1{\unskip}     \fi
\ifx \shownote     \undefined \def \shownote      #1{#1}          \fi
\ifx \showarticletitle \undefined \def \showarticletitle #1{#1}   \fi
\ifx \showURL      \undefined \def \showURL       {\relax}        \fi
\providecommand\bibfield[2]{#2}
\providecommand\bibinfo[2]{#2}
\providecommand\natexlab[1]{#1}
\providecommand\showeprint[2][]{arXiv:#2}

\bibitem[\protect\citeauthoryear{Agarwal and Singh}{Agarwal and Singh}{2017}]%
        {agarwal2017price}
\bibfield{author}{\bibinfo{person}{Naman Agarwal} {and} \bibinfo{person}{Karan
  Singh}.} \bibinfo{year}{2017}\natexlab{}.
\newblock \showarticletitle{The price of differential privacy for online
  learning}. In \bibinfo{booktitle}{\emph{International Conference on Machine
  Learning}}. PMLR, \bibinfo{pages}{32--40}.
\newblock


\bibitem[\protect\citeauthoryear{Akidau, Bradshaw, Chambers, Chernyak,
  Fern{\'a}ndez-Moctezuma, Lax, McVeety, Mills, Perry, Schmidt,
  et~al\mbox{.}}{Akidau et~al\mbox{.}}{2015}]%
        {akidau2015dataflow}
\bibfield{author}{\bibinfo{person}{Tyler Akidau}, \bibinfo{person}{Robert
  Bradshaw}, \bibinfo{person}{Craig Chambers}, \bibinfo{person}{Slava
  Chernyak}, \bibinfo{person}{Rafael~J Fern{\'a}ndez-Moctezuma},
  \bibinfo{person}{Reuven Lax}, \bibinfo{person}{Sam McVeety},
  \bibinfo{person}{Daniel Mills}, \bibinfo{person}{Frances Perry},
  \bibinfo{person}{Eric Schmidt}, {et~al\mbox{.}}}
  \bibinfo{year}{2015}\natexlab{}.
\newblock \showarticletitle{The dataflow model: a practical approach to
  balancing correctness, latency, and cost in massive-scale, unbounded,
  out-of-order data processing}.
\newblock  (\bibinfo{year}{2015}).
\newblock


\bibitem[\protect\citeauthoryear{Amin, Gillenwater, Joseph, Kulesza, and
  Vassilvitskii}{Amin et~al\mbox{.}}{2022}]%
        {amin2022plume}
\bibfield{author}{\bibinfo{person}{Kareem Amin}, \bibinfo{person}{Jennifer
  Gillenwater}, \bibinfo{person}{Matthew Joseph}, \bibinfo{person}{Alex
  Kulesza}, {and} \bibinfo{person}{Sergei Vassilvitskii}.}
  \bibinfo{year}{2022}\natexlab{}.
\newblock \showarticletitle{Plume: Differential Privacy at Scale}.
\newblock \bibinfo{journal}{\emph{arXiv preprint arXiv:2201.11603}}
  (\bibinfo{year}{2022}).
\newblock


\bibitem[\protect\citeauthoryear{Amin, Kulesza, Munoz, and Vassilvtiskii}{Amin
  et~al\mbox{.}}{2019}]%
        {amin2019bounding}
\bibfield{author}{\bibinfo{person}{Kareem Amin}, \bibinfo{person}{Alex
  Kulesza}, \bibinfo{person}{Andres Munoz}, {and} \bibinfo{person}{Sergei
  Vassilvtiskii}.} \bibinfo{year}{2019}\natexlab{}.
\newblock \showarticletitle{Bounding user contributions: A bias-variance
  trade-off in differential privacy}. In
  \bibinfo{booktitle}{\emph{International Conference on Machine Learning}}.
  PMLR, \bibinfo{pages}{263--271}.
\newblock


\bibitem[\protect\citeauthoryear{Andr{\'e}s, Bordenabe, Chatzikokolakis, and
  Palamidessi}{Andr{\'e}s et~al\mbox{.}}{2013}]%
        {andres2013geo}
\bibfield{author}{\bibinfo{person}{Miguel~E Andr{\'e}s},
  \bibinfo{person}{Nicol{\'a}s~E Bordenabe}, \bibinfo{person}{Konstantinos
  Chatzikokolakis}, {and} \bibinfo{person}{Catuscia Palamidessi}.}
  \bibinfo{year}{2013}\natexlab{}.
\newblock \showarticletitle{Geo-indistinguishability: Differential privacy for
  location-based systems}. In \bibinfo{booktitle}{\emph{Proceedings of the 2013
  ACM SIGSAC conference on Computer \& communications security}}.
  \bibinfo{pages}{901--914}.
\newblock


\bibitem[\protect\citeauthoryear{Bun and Steinke}{Bun and Steinke}{2016}]%
        {bun2016concentrated}
\bibfield{author}{\bibinfo{person}{Mark Bun} {and} \bibinfo{person}{Thomas
  Steinke}.} \bibinfo{year}{2016}\natexlab{}.
\newblock \showarticletitle{Concentrated differential privacy: Simplifications,
  extensions, and lower bounds}. In \bibinfo{booktitle}{\emph{Theory of
  Cryptography Conference}}. Springer, \bibinfo{pages}{635--658}.
\newblock


\bibitem[\protect\citeauthoryear{Calandrino, Kilzer, Narayanan, Felten, and
  Shmatikov}{Calandrino et~al\mbox{.}}{2011}]%
        {Calandrino11-like}
\bibfield{author}{\bibinfo{person}{Joseph~A. Calandrino}, \bibinfo{person}{Ann
  Kilzer}, \bibinfo{person}{Arvind Narayanan}, \bibinfo{person}{Edward~W.
  Felten}, {and} \bibinfo{person}{Vitaly Shmatikov}.}
  \bibinfo{year}{2011}\natexlab{}.
\newblock \showarticletitle{``{Y}ou Might Also Like:'' {P}rivacy Risks of
  Collaborative Filtering}. In \bibinfo{booktitle}{\emph{32nd {IEEE} Symp. on
  Security and Privacy}}. \bibinfo{pages}{231--246}.
\newblock


\bibitem[\protect\citeauthoryear{Carbone, Katsifodimos, Ewen, Markl, Haridi,
  and Tzoumas}{Carbone et~al\mbox{.}}{2015}]%
        {DBLP:CarboneKEMHT15flink}
\bibfield{author}{\bibinfo{person}{Paris Carbone}, \bibinfo{person}{Asterios
  Katsifodimos}, \bibinfo{person}{Stephan Ewen}, \bibinfo{person}{Volker
  Markl}, \bibinfo{person}{Seif Haridi}, {and} \bibinfo{person}{Kostas
  Tzoumas}.} \bibinfo{year}{2015}\natexlab{}.
\newblock \showarticletitle{Apache Flink{\texttrademark}: Stream and Batch
  Processing in a Single Engine}.
\newblock \bibinfo{journal}{\emph{{IEEE} Data Eng. Bull.}}
  \bibinfo{volume}{38}, \bibinfo{number}{4} (\bibinfo{year}{2015}),
  \bibinfo{pages}{28--38}.
\newblock
\urldef\tempurl%
\url{http://sites.computer.org/debull/A15dec/p28.pdf}
\showURL{%
\tempurl}


\bibitem[\protect\citeauthoryear{Cardoso and Rogers}{Cardoso and
  Rogers}{2022}]%
        {cardoso2022differentially}
\bibfield{author}{\bibinfo{person}{Adrian~Rivera Cardoso} {and}
  \bibinfo{person}{Ryan Rogers}.} \bibinfo{year}{2022}\natexlab{}.
\newblock \showarticletitle{Differentially private histograms under continual
  observation: Streaming selection into the unknown}. In
  \bibinfo{booktitle}{\emph{International Conference on Artificial Intelligence
  and Statistics}}. PMLR, \bibinfo{pages}{2397--2419}.
\newblock


\bibitem[\protect\citeauthoryear{Chan, Shi, and Song}{Chan
  et~al\mbox{.}}{2011}]%
        {CSS11-continual}
\bibfield{author}{\bibinfo{person}{T.-H.~Hubert Chan}, \bibinfo{person}{Elaine
  Shi}, {and} \bibinfo{person}{Dawn Song}.} \bibinfo{year}{2011}\natexlab{}.
\newblock \showarticletitle{Private and Continual Release of Statistics}.
\newblock \bibinfo{journal}{\emph{{ACM} Trans. on Information Systems
  Security}} \bibinfo{volume}{14}, \bibinfo{number}{3} (\bibinfo{date}{Nov.}
  \bibinfo{year}{2011}), \bibinfo{pages}{26:1--26:24}.
\newblock


\bibitem[\protect\citeauthoryear{Cho, Ippolito, and Yu}{Cho
  et~al\mbox{.}}{2020}]%
        {cho2020contact}
\bibfield{author}{\bibinfo{person}{Hyunghoon Cho}, \bibinfo{person}{Daphne
  Ippolito}, {and} \bibinfo{person}{Yun~William Yu}.}
  \bibinfo{year}{2020}\natexlab{}.
\newblock \showarticletitle{Contact tracing mobile apps for COVID-19: Privacy
  considerations and related trade-offs}.
\newblock \bibinfo{journal}{\emph{arXiv preprint arXiv:2003.11511}}
  (\bibinfo{year}{2020}).
\newblock


\bibitem[\protect\citeauthoryear{Corbett, Dean, Epstein, Fikes, Frost, Furman,
  Ghemawat, Gubarev, Heiser, Hochschild, et~al\mbox{.}}{Corbett
  et~al\mbox{.}}{2013}]%
        {corbett2013spanner}
\bibfield{author}{\bibinfo{person}{James~C Corbett}, \bibinfo{person}{Jeffrey
  Dean}, \bibinfo{person}{Michael Epstein}, \bibinfo{person}{Andrew Fikes},
  \bibinfo{person}{Christopher Frost}, \bibinfo{person}{Jeffrey~John Furman},
  \bibinfo{person}{Sanjay Ghemawat}, \bibinfo{person}{Andrey Gubarev},
  \bibinfo{person}{Christopher Heiser}, \bibinfo{person}{Peter Hochschild},
  {et~al\mbox{.}}} \bibinfo{year}{2013}\natexlab{}.
\newblock \showarticletitle{Spanner: Google’s globally distributed database}.
\newblock \bibinfo{journal}{\emph{ACM Transactions on Computer Systems (TOCS)}}
  \bibinfo{volume}{31}, \bibinfo{number}{3} (\bibinfo{year}{2013}),
  \bibinfo{pages}{1--22}.
\newblock


\bibitem[\protect\citeauthoryear{Denisov, McMahan, Rush, Smith, and
  Thakurta}{Denisov et~al\mbox{.}}{2022}]%
        {denisov2022improved}
\bibfield{author}{\bibinfo{person}{Sergey Denisov}, \bibinfo{person}{Brendan
  McMahan}, \bibinfo{person}{Keith Rush}, \bibinfo{person}{Adam Smith}, {and}
  \bibinfo{person}{Abhradeep Thakurta}.} \bibinfo{year}{2022}\natexlab{}.
\newblock \showarticletitle{Improved differential privacy for sgd via optimal
  private linear operators on adaptive streams}.
\newblock \bibinfo{journal}{\emph{arXiv preprint arXiv:2202.08312}}
  (\bibinfo{year}{2022}).
\newblock


\bibitem[\protect\citeauthoryear{Desfontaines}{Desfontaines}{2024}]%
        {DamienBlog}
\bibfield{author}{\bibinfo{person}{Damien Desfontaines}.}
  \bibinfo{year}{2024}\natexlab{}.
\newblock \bibinfo{title}{A list of real-world uses of differential privacy}.
\newblock
\newblock
\urldef\tempurl%
\url{https://desfontain.es/blog/real-world-differential-privacy.html}
\showURL{%
\tempurl}


\bibitem[\protect\citeauthoryear{Dwork, Kenthapadi, McSherry, Mironov, and
  Naor}{Dwork et~al\mbox{.}}{2006a}]%
        {ODO}
\bibfield{author}{\bibinfo{person}{Cynthia Dwork}, \bibinfo{person}{Krishnaram
  Kenthapadi}, \bibinfo{person}{Frank McSherry}, \bibinfo{person}{Ilya
  Mironov}, {and} \bibinfo{person}{Moni Naor}.}
  \bibinfo{year}{2006}\natexlab{a}.
\newblock \showarticletitle{Our data, ourselves: Privacy via distributed noise
  generation}. In \bibinfo{booktitle}{\emph{Advances in
  Cryptology---EUROCRYPT}}. \bibinfo{pages}{486--503}.
\newblock


\bibitem[\protect\citeauthoryear{Dwork, McSherry, Nissim, and Smith}{Dwork
  et~al\mbox{.}}{2006b}]%
        {DMNS}
\bibfield{author}{\bibinfo{person}{Cynthia Dwork}, \bibinfo{person}{Frank
  McSherry}, \bibinfo{person}{Kobbi Nissim}, {and} \bibinfo{person}{Adam
  Smith}.} \bibinfo{year}{2006}\natexlab{b}.
\newblock \showarticletitle{Calibrating Noise to Sensitivity in Private Data
  Analysis}. In \bibinfo{booktitle}{\emph{Proc. of the Third Conf. on Theory of
  Cryptography (TCC)}}. \bibinfo{pages}{265--284}.
\newblock
\urldef\tempurl%
\url{http://dx.doi.org/10.1007/11681878\_14}
\showURL{%
\tempurl}


\bibitem[\protect\citeauthoryear{Dwork, Naor, Pitassi, and Rothblum}{Dwork
  et~al\mbox{.}}{2010a}]%
        {Dwork-continual}
\bibfield{author}{\bibinfo{person}{Cynthia Dwork}, \bibinfo{person}{Moni Naor},
  \bibinfo{person}{Toniann Pitassi}, {and} \bibinfo{person}{Guy~N. Rothblum}.}
  \bibinfo{year}{2010}\natexlab{a}.
\newblock \showarticletitle{Differential Privacy Under Continual Observation}.
  In \bibinfo{booktitle}{\emph{Proc. of the Forty-Second {ACM} Symp. on Theory
  of Computing ({STOC}'10)}}. \bibinfo{pages}{715--724}.
\newblock


\bibitem[\protect\citeauthoryear{Dwork, Naor, Pitassi, Rothblum, and
  Yekhanin}{Dwork et~al\mbox{.}}{2010b}]%
        {dwork2010pan}
\bibfield{author}{\bibinfo{person}{Cynthia Dwork}, \bibinfo{person}{Moni Naor},
  \bibinfo{person}{Toniann Pitassi}, \bibinfo{person}{Guy~N Rothblum}, {and}
  \bibinfo{person}{Sergey Yekhanin}.} \bibinfo{year}{2010}\natexlab{b}.
\newblock \showarticletitle{Pan-Private Streaming Algorithms.}. In
  \bibinfo{booktitle}{\emph{ics}}. \bibinfo{pages}{66--80}.
\newblock


\bibitem[\protect\citeauthoryear{Dwork and Roth}{Dwork and Roth}{2014}]%
        {dwork2014algorithmic}
\bibfield{author}{\bibinfo{person}{Cynthia Dwork} {and} \bibinfo{person}{Aaron
  Roth}.} \bibinfo{year}{2014}\natexlab{}.
\newblock \showarticletitle{The algorithmic foundations of differential
  privacy}.
\newblock \bibinfo{journal}{\emph{Foundations and Trends in Theoretical
  Computer Science}} \bibinfo{volume}{9}, \bibinfo{number}{3--4}
  (\bibinfo{year}{2014}), \bibinfo{pages}{211--407}.
\newblock


\bibitem[\protect\citeauthoryear{Dwork and Rothblum}{Dwork and
  Rothblum}{2016}]%
        {dwork2016concentrated}
\bibfield{author}{\bibinfo{person}{Cynthia Dwork} {and} \bibinfo{person}{Guy~N
  Rothblum}.} \bibinfo{year}{2016}\natexlab{}.
\newblock \showarticletitle{Concentrated differential privacy}.
\newblock \bibinfo{journal}{\emph{arXiv preprint arXiv:1603.01887}}
  (\bibinfo{year}{2016}).
\newblock


\bibitem[\protect\citeauthoryear{Epasto, Graney, Mao, Medina, Pál, and
  Sotakova}{Epasto et~al\mbox{.}}{2023a}]%
        {FLEDGE}
\bibfield{author}{\bibinfo{person}{Alessandro Epasto}, \bibinfo{person}{Kevin
  Graney}, \bibinfo{person}{Jieming Mao}, \bibinfo{person}{Andres~Munoz
  Medina}, \bibinfo{person}{Martin Pál}, {and} \bibinfo{person}{Miroslava
  Sotakova}.} \bibinfo{year}{2023}\natexlab{a}.
\newblock \bibinfo{title}{Differentially Private Algorithms for k-Anonymity
  Server}.
\newblock
\newblock
\urldef\tempurl%
\url{https://github.com/WICG/turtledove/blob/main/DP_kanon_server.pdf}
\showURL{%
\tempurl}


\bibitem[\protect\citeauthoryear{Epasto, Mao, Medina, Mirrokni, Vassilvitskii,
  and Zhong}{Epasto et~al\mbox{.}}{2023b}]%
        {epasto2023differentially}
\bibfield{author}{\bibinfo{person}{Alessandro Epasto}, \bibinfo{person}{Jieming
  Mao}, \bibinfo{person}{Andres~Munoz Medina}, \bibinfo{person}{Vahab
  Mirrokni}, \bibinfo{person}{Sergei Vassilvitskii}, {and}
  \bibinfo{person}{Peilin Zhong}.} \bibinfo{year}{2023}\natexlab{b}.
\newblock \showarticletitle{Differentially private continual releases of
  streaming frequency moment estimations}.
\newblock \bibinfo{journal}{\emph{arXiv preprint arXiv:2301.05605}}
  (\bibinfo{year}{2023}).
\newblock


\bibitem[\protect\citeauthoryear{Fegaras}{Fegaras}{2016}]%
        {fegaras2016incremental}
\bibfield{author}{\bibinfo{person}{Leonidas Fegaras}.}
  \bibinfo{year}{2016}\natexlab{}.
\newblock \showarticletitle{Incremental query processing on big data streams}.
\newblock \bibinfo{journal}{\emph{IEEE Transactions on Knowledge and Data
  Engineering}} \bibinfo{volume}{28}, \bibinfo{number}{11}
  (\bibinfo{year}{2016}), \bibinfo{pages}{2998--3012}.
\newblock


\bibitem[\protect\citeauthoryear{Fragkoulis, Carbone, Kalavri, and
  Katsifodimos}{Fragkoulis et~al\mbox{.}}{2020}]%
        {fragkoulis2020survey}
\bibfield{author}{\bibinfo{person}{Marios Fragkoulis}, \bibinfo{person}{Paris
  Carbone}, \bibinfo{person}{Vasiliki Kalavri}, {and} \bibinfo{person}{Asterios
  Katsifodimos}.} \bibinfo{year}{2020}\natexlab{}.
\newblock \showarticletitle{A survey on the evolution of stream processing
  systems}.
\newblock \bibinfo{journal}{\emph{arXiv preprint arXiv:2008.00842}}
  (\bibinfo{year}{2020}).
\newblock


\bibitem[\protect\citeauthoryear{Henzinger and Upadhyay}{Henzinger and
  Upadhyay}{2022}]%
        {henzinger2022constant}
\bibfield{author}{\bibinfo{person}{Monika Henzinger} {and}
  \bibinfo{person}{Jalaj Upadhyay}.} \bibinfo{year}{2022}\natexlab{}.
\newblock \showarticletitle{Constant matters: Fine-grained Complexity of
  Differentially Private Continual Observation Using Completely Bounded Norms}.
\newblock \bibinfo{journal}{\emph{arXiv preprint arXiv:2202.11205}}
  (\bibinfo{year}{2022}).
\newblock


\bibitem[\protect\citeauthoryear{Henzinger, Upadhyay, and Upadhyay}{Henzinger
  et~al\mbox{.}}{2023}]%
        {henzinger2023almost}
\bibfield{author}{\bibinfo{person}{Monika Henzinger}, \bibinfo{person}{Jalaj
  Upadhyay}, {and} \bibinfo{person}{Sarvagya Upadhyay}.}
  \bibinfo{year}{2023}\natexlab{}.
\newblock \showarticletitle{Almost tight error bounds on differentially private
  continual counting}. In \bibinfo{booktitle}{\emph{Proceedings of the 2023
  Annual ACM-SIAM Symposium on Discrete Algorithms (SODA)}}. SIAM,
  \bibinfo{pages}{5003--5039}.
\newblock


\bibitem[\protect\citeauthoryear{Honaker}{Honaker}{2015}]%
        {honaker2015efficient}
\bibfield{author}{\bibinfo{person}{James Honaker}.}
  \bibinfo{year}{2015}\natexlab{}.
\newblock \showarticletitle{Efficient use of differentially private binary
  trees}.
\newblock \bibinfo{journal}{\emph{Theory and Practice of Differential Privacy
  (TPDP 2015), London, UK}} (\bibinfo{year}{2015}).
\newblock


\bibitem[\protect\citeauthoryear{Isah, Abughofa, Mahfuz, Ajerla, Zulkernine,
  and Khan}{Isah et~al\mbox{.}}{2019}]%
        {isah2019survey}
\bibfield{author}{\bibinfo{person}{Haruna Isah}, \bibinfo{person}{Tariq
  Abughofa}, \bibinfo{person}{Sazia Mahfuz}, \bibinfo{person}{Dharmitha
  Ajerla}, \bibinfo{person}{Farhana Zulkernine}, {and} \bibinfo{person}{Shahzad
  Khan}.} \bibinfo{year}{2019}\natexlab{}.
\newblock \showarticletitle{A survey of distributed data stream processing
  frameworks}.
\newblock \bibinfo{journal}{\emph{IEEE Access}}  \bibinfo{volume}{7}
  (\bibinfo{year}{2019}), \bibinfo{pages}{154300--154316}.
\newblock


\bibitem[\protect\citeauthoryear{Jain, Kothari, and Thakurta}{Jain
  et~al\mbox{.}}{2012}]%
        {JKT-online}
\bibfield{author}{\bibinfo{person}{Prateek Jain}, \bibinfo{person}{Pravesh
  Kothari}, {and} \bibinfo{person}{Abhradeep Thakurta}.}
  \bibinfo{year}{2012}\natexlab{}.
\newblock \showarticletitle{Differentially Private Online Learning}. In
  \bibinfo{booktitle}{\emph{Proc. of the 25th Annual Conf. on Learning Theory
  (COLT)}}, Vol.~\bibinfo{volume}{23}. \bibinfo{pages}{24.1--24.34}.
\newblock


\bibitem[\protect\citeauthoryear{Kairouz, McMahan, Song, Thakkar, Thakurta, and
  Xu}{Kairouz et~al\mbox{.}}{2021}]%
        {kairouz2021practical}
\bibfield{author}{\bibinfo{person}{Peter Kairouz}, \bibinfo{person}{Brendan
  McMahan}, \bibinfo{person}{Shuang Song}, \bibinfo{person}{Om Thakkar},
  \bibinfo{person}{Abhradeep Thakurta}, {and} \bibinfo{person}{Zheng Xu}.}
  \bibinfo{year}{2021}\natexlab{}.
\newblock \showarticletitle{Practical and private (deep) learning without
  sampling or shuffling}. In \bibinfo{booktitle}{\emph{International Conference
  on Machine Learning}}. PMLR, \bibinfo{pages}{5213--5225}.
\newblock


\bibitem[\protect\citeauthoryear{Kairouz, Oh, and Viswanath}{Kairouz
  et~al\mbox{.}}{2017}]%
        {KOV17}
\bibfield{author}{\bibinfo{person}{Peter Kairouz}, \bibinfo{person}{Sewoong
  Oh}, {and} \bibinfo{person}{Pramod Viswanath}.}
  \bibinfo{year}{2017}\natexlab{}.
\newblock \showarticletitle{The Composition Theorem for Differential Privacy}.
\newblock \bibinfo{journal}{\emph{{IEEE} Trans. Inf. Theory}}
  \bibinfo{volume}{63}, \bibinfo{number}{6} (\bibinfo{year}{2017}),
  \bibinfo{pages}{4037--4049}.
\newblock
\urldef\tempurl%
\url{https://doi.org/10.1109/TIT.2017.2685505}
\showDOI{\tempurl}


\bibitem[\protect\citeauthoryear{Kamath, Mouzakis, Regehr, Singhal, Steinke,
  and Ullman}{Kamath et~al\mbox{.}}{2023}]%
        {kamath2023bias}
\bibfield{author}{\bibinfo{person}{Gautam Kamath}, \bibinfo{person}{Argyris
  Mouzakis}, \bibinfo{person}{Matthew Regehr}, \bibinfo{person}{Vikrant
  Singhal}, \bibinfo{person}{Thomas Steinke}, {and} \bibinfo{person}{Jonathan
  Ullman}.} \bibinfo{year}{2023}\natexlab{}.
\newblock \showarticletitle{A Bias-Variance-Privacy Trilemma for Statistical
  Estimation}.
\newblock \bibinfo{journal}{\emph{arXiv preprint arXiv:2301.13334}}
  (\bibinfo{year}{2023}).
\newblock


\bibitem[\protect\citeauthoryear{Korolova, Kenthapadi, Mishra, and
  Ntoulas}{Korolova et~al\mbox{.}}{2009}]%
        {korolova2009releasing}
\bibfield{author}{\bibinfo{person}{Aleksandra Korolova},
  \bibinfo{person}{Krishnaram Kenthapadi}, \bibinfo{person}{Nina Mishra}, {and}
  \bibinfo{person}{Alexandros Ntoulas}.} \bibinfo{year}{2009}\natexlab{}.
\newblock \showarticletitle{Releasing search queries and clicks privately}. In
  \bibinfo{booktitle}{\emph{Proceedings of the 18th international conference on
  World wide web}}. \bibinfo{pages}{171--180}.
\newblock


\bibitem[\protect\citeauthoryear{Latif, Hao, Zhang, Bassily, and Rountev}{Latif
  et~al\mbox{.}}{2020}]%
        {latif2020introducing}
\bibfield{author}{\bibinfo{person}{Sufian Latif}, \bibinfo{person}{Yu Hao},
  \bibinfo{person}{Hailong Zhang}, \bibinfo{person}{Raef Bassily}, {and}
  \bibinfo{person}{Atanas Rountev}.} \bibinfo{year}{2020}\natexlab{}.
\newblock \showarticletitle{Introducing differential privacy mechanisms for
  mobile app analytics of dynamic content}. In \bibinfo{booktitle}{\emph{2020
  IEEE International Conference on Software Maintenance and Evolution
  (ICSME)}}. IEEE, \bibinfo{pages}{267--277}.
\newblock


\bibitem[\protect\citeauthoryear{Li, Miklau, Hay, McGregor, and Rastogi}{Li
  et~al\mbox{.}}{2015}]%
        {li2015matrix}
\bibfield{author}{\bibinfo{person}{Chao Li}, \bibinfo{person}{Gerome Miklau},
  \bibinfo{person}{Michael Hay}, \bibinfo{person}{Andrew McGregor}, {and}
  \bibinfo{person}{Vibhor Rastogi}.} \bibinfo{year}{2015}\natexlab{}.
\newblock \showarticletitle{The matrix mechanism: optimizing linear counting
  queries under differential privacy}.
\newblock \bibinfo{journal}{\emph{The VLDB journal}} \bibinfo{volume}{24},
  \bibinfo{number}{6} (\bibinfo{year}{2015}), \bibinfo{pages}{757--781}.
\newblock


\bibitem[\protect\citeauthoryear{Liu, Zhang, and Fang}{Liu
  et~al\mbox{.}}{2018}]%
        {liu2018epic}
\bibfield{author}{\bibinfo{person}{Jianqing Liu}, \bibinfo{person}{Chi Zhang},
  {and} \bibinfo{person}{Yuguang Fang}.} \bibinfo{year}{2018}\natexlab{}.
\newblock \showarticletitle{Epic: A differential privacy framework to defend
  smart homes against internet traffic analysis}.
\newblock \bibinfo{journal}{\emph{IEEE Internet of Things Journal}}
  \bibinfo{volume}{5}, \bibinfo{number}{2} (\bibinfo{year}{2018}),
  \bibinfo{pages}{1206--1217}.
\newblock


\bibitem[\protect\citeauthoryear{Liu, Suresh, Zhu, Kairouz, and Gruteser}{Liu
  et~al\mbox{.}}{2022}]%
        {liu2022histogram}
\bibfield{author}{\bibinfo{person}{Yuhan Liu}, \bibinfo{person}{Ananda~Theertha
  Suresh}, \bibinfo{person}{Wennan Zhu}, \bibinfo{person}{Peter Kairouz}, {and}
  \bibinfo{person}{Marco Gruteser}.} \bibinfo{year}{2022}\natexlab{}.
\newblock \showarticletitle{Histogram Estimation under User-level Privacy with
  Heterogeneous Data}.
\newblock \bibinfo{journal}{\emph{arXiv preprint arXiv:2206.03008}}
  (\bibinfo{year}{2022}).
\newblock


\bibitem[\protect\citeauthoryear{Matousek, Nikolov, and Talwar}{Matousek
  et~al\mbox{.}}{2014}]%
        {matousek2014factorization}
\bibfield{author}{\bibinfo{person}{Jiri Matousek}, \bibinfo{person}{Aleksandar
  Nikolov}, {and} \bibinfo{person}{Kunal Talwar}.}
  \bibinfo{year}{2014}\natexlab{}.
\newblock \showarticletitle{Factorization norms and hereditary discrepancy}.
\newblock \bibinfo{journal}{\emph{arXiv preprint arXiv:1408.1376}}
  (\bibinfo{year}{2014}).
\newblock


\bibitem[\protect\citeauthoryear{McMahan, Pillutla, Steinke, Thakurta,
  et~al\mbox{.}}{McMahan et~al\mbox{.}}{2024}]%
        {mcmahan2024efficient}
\bibfield{author}{\bibinfo{person}{H~Brendan McMahan}, \bibinfo{person}{Krishna
  Pillutla}, \bibinfo{person}{Thomas Steinke}, \bibinfo{person}{Abhradeep
  Thakurta}, {et~al\mbox{.}}} \bibinfo{year}{2024}\natexlab{}.
\newblock \showarticletitle{Efficient and near-optimal noise generation for
  streaming differential privacy}.
\newblock \bibinfo{journal}{\emph{arXiv preprint arXiv:2404.16706}}
  (\bibinfo{year}{2024}).
\newblock


\bibitem[\protect\citeauthoryear{Mironov}{Mironov}{2017}]%
        {mironov2017renyi}
\bibfield{author}{\bibinfo{person}{Ilya Mironov}.}
  \bibinfo{year}{2017}\natexlab{}.
\newblock \showarticletitle{R{\'e}nyi differential privacy}. In
  \bibinfo{booktitle}{\emph{2017 IEEE 30th Computer Security Foundations
  Symposium (CSF)}}. IEEE, \bibinfo{pages}{263--275}.
\newblock


\bibitem[\protect\citeauthoryear{Perrier, Asghar, and Kaafar}{Perrier
  et~al\mbox{.}}{2018}]%
        {perrier2018private}
\bibfield{author}{\bibinfo{person}{Victor Perrier},
  \bibinfo{person}{Hassan~Jameel Asghar}, {and} \bibinfo{person}{Dali Kaafar}.}
  \bibinfo{year}{2018}\natexlab{}.
\newblock \showarticletitle{Private continual release of real-valued data
  streams}.
\newblock \bibinfo{journal}{\emph{arXiv preprint arXiv:1811.03197}}
  (\bibinfo{year}{2018}).
\newblock


\bibitem[\protect\citeauthoryear{Phelan, McCarthy, and Smyth}{Phelan
  et~al\mbox{.}}{2009}]%
        {phelan2009using}
\bibfield{author}{\bibinfo{person}{Owen Phelan}, \bibinfo{person}{Kevin
  McCarthy}, {and} \bibinfo{person}{Barry Smyth}.}
  \bibinfo{year}{2009}\natexlab{}.
\newblock \showarticletitle{Using twitter to recommend real-time topical news}.
  In \bibinfo{booktitle}{\emph{Proceedings of the third ACM conference on
  Recommender systems}}. \bibinfo{pages}{385--388}.
\newblock


\bibitem[\protect\citeauthoryear{Samwel, Cieslewicz, Handy, Govig, Venetis,
  Yang, Peters, Shute, Tenedorio, Apte, et~al\mbox{.}}{Samwel
  et~al\mbox{.}}{2018}]%
        {samwel2018f1}
\bibfield{author}{\bibinfo{person}{Bart Samwel}, \bibinfo{person}{John
  Cieslewicz}, \bibinfo{person}{Ben Handy}, \bibinfo{person}{Jason Govig},
  \bibinfo{person}{Petros Venetis}, \bibinfo{person}{Chanjun Yang},
  \bibinfo{person}{Keith Peters}, \bibinfo{person}{Jeff Shute},
  \bibinfo{person}{Daniel Tenedorio}, \bibinfo{person}{Himani Apte},
  {et~al\mbox{.}}} \bibinfo{year}{2018}\natexlab{}.
\newblock \showarticletitle{F1 query: Declarative querying at scale}.
\newblock \bibinfo{journal}{\emph{Proceedings of the VLDB Endowment}}
  \bibinfo{volume}{11}, \bibinfo{number}{12} (\bibinfo{year}{2018}),
  \bibinfo{pages}{1835--1848}.
\newblock


\bibitem[\protect\citeauthoryear{Smith and Thakurta}{Smith and
  Thakurta}{2013}]%
        {thakurta2013nearly}
\bibfield{author}{\bibinfo{person}{Adam Smith} {and} \bibinfo{person}{Abhradeep
  Thakurta}.} \bibinfo{year}{2013}\natexlab{}.
\newblock \showarticletitle{(Nearly) optimal algorithms for private online
  learning in full-information and bandit settings}. In
  \bibinfo{booktitle}{\emph{Advances in Neural Information Processing
  Systems}}. \bibinfo{pages}{2733--2741}.
\newblock


\bibitem[\protect\citeauthoryear{Srivastava and Widom}{Srivastava and
  Widom}{2004}]%
        {srivastava2004flexible}
\bibfield{author}{\bibinfo{person}{Utkarsh Srivastava} {and}
  \bibinfo{person}{Jennifer Widom}.} \bibinfo{year}{2004}\natexlab{}.
\newblock \showarticletitle{Flexible time management in data stream systems}.
  In \bibinfo{booktitle}{\emph{Proceedings of the twenty-third ACM
  SIGMOD-SIGACT-SIGART symposium on Principles of database systems}}.
  \bibinfo{pages}{263--274}.
\newblock


\bibitem[\protect\citeauthoryear{Syed, Voelske, Potthast, and Stein}{Syed
  et~al\mbox{.}}{2017}]%
        {syed2017webis}
\bibfield{author}{\bibinfo{person}{Shahbaz Syed}, \bibinfo{person}{Michael
  Voelske}, \bibinfo{person}{Martin Potthast}, {and} \bibinfo{person}{Benno
  Stein}.} \bibinfo{year}{2017}\natexlab{}.
\newblock \showarticletitle{Webis-tldr-17 corpus}.
\newblock \bibinfo{journal}{\emph{Zenodo, November}} (\bibinfo{year}{2017}).
\newblock


\bibitem[\protect\citeauthoryear{Tan, Yu, Bashir, Cheng, Ming, Zhao, and
  Zhou}{Tan et~al\mbox{.}}{2021}]%
        {tan2021toward}
\bibfield{author}{\bibinfo{person}{Liang Tan}, \bibinfo{person}{Keping Yu},
  \bibinfo{person}{Ali~Kashif Bashir}, \bibinfo{person}{Xiaofan Cheng},
  \bibinfo{person}{Fangpeng Ming}, \bibinfo{person}{Liang Zhao}, {and}
  \bibinfo{person}{Xiaokang Zhou}.} \bibinfo{year}{2021}\natexlab{}.
\newblock \showarticletitle{Toward real-time and efficient cardiovascular
  monitoring for COVID-19 patients by 5G-enabled wearable medical devices: a
  deep learning approach}.
\newblock \bibinfo{journal}{\emph{Neural Computing and Applications}}
  (\bibinfo{year}{2021}), \bibinfo{pages}{1--14}.
\newblock


\bibitem[\protect\citeauthoryear{To, Soto, and Markl}{To et~al\mbox{.}}{2018}]%
        {to2018survey}
\bibfield{author}{\bibinfo{person}{Quoc-Cuong To}, \bibinfo{person}{Juan Soto},
  {and} \bibinfo{person}{Volker Markl}.} \bibinfo{year}{2018}\natexlab{}.
\newblock \showarticletitle{A survey of state management in big data processing
  systems}.
\newblock \bibinfo{journal}{\emph{The VLDB Journal}} \bibinfo{volume}{27},
  \bibinfo{number}{6} (\bibinfo{year}{2018}), \bibinfo{pages}{847--872}.
\newblock


\bibitem[\protect\citeauthoryear{Vadhan}{Vadhan}{2017}]%
        {vadhan2017complexity}
\bibfield{author}{\bibinfo{person}{Salil Vadhan}.}
  \bibinfo{year}{2017}\natexlab{}.
\newblock \showarticletitle{The complexity of differential privacy}.
\newblock \bibinfo{journal}{\emph{Tutorials on the Foundations of Cryptography:
  Dedicated to Oded Goldreich}} (\bibinfo{year}{2017}),
  \bibinfo{pages}{347--450}.
\newblock


\bibitem[\protect\citeauthoryear{Wang, Zhang, Yang, Lim, and Ma}{Wang
  et~al\mbox{.}}{2016}]%
        {wang2016differential}
\bibfield{author}{\bibinfo{person}{Leye Wang}, \bibinfo{person}{Daqing Zhang},
  \bibinfo{person}{Dingqi Yang}, \bibinfo{person}{Brian~Y Lim}, {and}
  \bibinfo{person}{Xiaojuan Ma}.} \bibinfo{year}{2016}\natexlab{}.
\newblock \showarticletitle{Differential location privacy for sparse mobile
  crowdsensing}. In \bibinfo{booktitle}{\emph{2016 IEEE 16th International
  Conference on Data Mining (ICDM)}}. IEEE, \bibinfo{pages}{1257--1262}.
\newblock


\bibitem[\protect\citeauthoryear{Wilson, Zhang, Lam, Desfontaines,
  Simmons-Marengo, and Gipson}{Wilson et~al\mbox{.}}{2019}]%
        {wilson2019differentially}
\bibfield{author}{\bibinfo{person}{Royce~J Wilson},
  \bibinfo{person}{Celia~Yuxin Zhang}, \bibinfo{person}{William Lam},
  \bibinfo{person}{Damien Desfontaines}, \bibinfo{person}{Daniel
  Simmons-Marengo}, {and} \bibinfo{person}{Bryant Gipson}.}
  \bibinfo{year}{2019}\natexlab{}.
\newblock \showarticletitle{Differentially private SQL with bounded user
  contribution}.
\newblock \bibinfo{journal}{\emph{arXiv preprint arXiv:1909.01917}}
  (\bibinfo{year}{2019}).
\newblock


\bibitem[\protect\citeauthoryear{Zaharia, Das, Li, Hunter, Shenker, and
  Stoica}{Zaharia et~al\mbox{.}}{2013}]%
        {Zaharia13streamingspark}
\bibfield{author}{\bibinfo{person}{Matei Zaharia}, \bibinfo{person}{Tathagata
  Das}, \bibinfo{person}{Haoyuan Li}, \bibinfo{person}{Timothy Hunter},
  \bibinfo{person}{Scott Shenker}, {and} \bibinfo{person}{Ion Stoica}.}
  \bibinfo{year}{2013}\natexlab{}.
\newblock \showarticletitle{Discretized Streams: Fault-Tolerant Streaming
  Computation at Scale}. In \bibinfo{booktitle}{\emph{Proceedings of the
  Twenty-Fourth ACM Symposium on Operating Systems Principles}} (Farminton,
  Pennsylvania) \emph{(\bibinfo{series}{SOSP '13})}.
  \bibinfo{publisher}{Association for Computing Machinery},
  \bibinfo{address}{New York, NY, USA}, \bibinfo{pages}{423–438}.
\newblock
\showISBNx{9781450323888}
\urldef\tempurl%
\url{https://doi.org/10.1145/2517349.2522737}
\showDOI{\tempurl}


\bibitem[\protect\citeauthoryear{Úlfar Erlingsson, Mironov, Raghunathan, and
  Song}{Úlfar Erlingsson et~al\mbox{.}}{2019}]%
        {erlingsson2019private}
\bibfield{author}{\bibinfo{person}{Úlfar Erlingsson}, \bibinfo{person}{Ilya
  Mironov}, \bibinfo{person}{Ananth Raghunathan}, {and} \bibinfo{person}{Shuang
  Song}.} \bibinfo{year}{2019}\natexlab{}.
\newblock \bibinfo{title}{That which we call private}.
\newblock
\newblock
\showeprint[arxiv]{1908.03566}~[cs.LG]


\end{thebibliography}
\appendix
\section{Terms from Streaming Data Processing}
\label{sec:glossary}

\begin{definition}[Input driven stream processing]
For a stream processing system, the statistic release is driven by inputs. The output computation is based on the current input (stateless), or a sequence of inputs (stateful)~\cite{to2018survey}.

\label{def:input-driven}
\end{definition}

\begin{definition}[Event time and event time domain]
Event time is the time at which the event itself actually occurred, i.e. a record of system clock
time (for whatever system generated the event) at the time of occurrence. Each event timestamp is determined from a discrete, ordered domain, which is called event time domain~\cite{srivastava2004flexible}.
\label{def:event-time}
\end{definition}

\begin{definition}[Processing time and processing time domain]
Processing time is the time at which an event is observed by the stream processing system, i.e. the current time according to the system clock. Each processing timestamp is determined from a discrete, ordered domain, which is called processing time domain.
\label{def:processing-time}
\end{definition}

\begin{definition}[Windowing]
Windowing slices up a dataset into finite chunks for processing as a group. It determines where in event time data are grouped together for processing~\cite{akidau2015dataflow}. Some common windows include fixed windows, sliding windows and sessions.

\label{def:windowing}
\end{definition}

\begin{definition}[Triggering]
Trigger is a mechanism to stimulate the output of a specific window at a grouping operation. It determines when in processing time the results of groupings are emitted as panes~\cite{akidau2015dataflow}. There are many ways to define a trigger, including watermark, processing time timers and data arrival.

\label{def:triggering-time}
\end{definition}

\begin{definition}[Micro-batch]
Each micro-batch is a partitioned dataset. The continuous data stream is discretized into many micro-batches~\cite{akidau2015dataflow}, and the execution of each micro-batch resembles continuous processing.

\label{def:microbatch}
\end{definition}

\section{Differential Privacy on Streams}
\label{sec:dp-on-stream-appendix}

Consider a data stream which is a collection of records $D=[d_1, d_2,...)$, where each $d_i$ is a tuple $(\recordkey,$ $\recordvalue,$ $\recordtimestamp,$ $\recorduserid)$ that can be assigned to certain event-time window $w \in W$. The computation of records in $w$ happens at trigger time $tr_i \in \tr_{w}$.

Throughout this paper we will consider algorithms which take streams as input and produce output in trigger times from $\tr$ of certain window $w$: at each trigger time 
$tr_i$, the algorithm processes data up to this time. \emph{While the discussion of DP processing focus on each window, the DP guarantee is for the entire data stream, because user contributions are bounded over the complete stream.}

In the following, we provide a formal definition of differential privacy we adhere to in the paper (which includes the notion of neighboring data streams). Note that by Definition~\ref{def:neighbors} we consider user level differential privacy with fixed triggering times. Triggering times are not protected.

\begin{defn}
\label{def:neighbors}
Streams $D, D'$ are called neighbouring if they only differ by the  absence or presence of the data of one user while keeping the trigger times $\tr$ fixed between the two datasets.
\end{defn}

\begin{defn}[Differential Privacy~\cite{DMNS,ODO}] 
An algorithm $\calA$ is $(\epsilon,\delta)$-differentially private (DP) if for any neighboring pairs of data streams $D$ and $D'$ the following is true for any $S\subseteq\texttt{Range}(\calA)$:
$$\Pr[\calA(D)\in S]\leq e^\epsilon\Pr[\calA(D')\in S]+\delta.$$
\label{def:DP}
\end{defn}
We can similarly define Concentrated DP (zCDP) \cite{dwork2016concentrated,bun2016concentrated} with the neighbouring relation given by Definition \ref{def:neighbors}. We use zCDP as an analysis tool\footnote{Throughout the paper, zCDP is used as a convenient tool to analyze DP-Binary Tree. The privacy guarantee is reported in $(\epsilon,\delta)$-DP.}, as it has clean composition properties, but we convert the final guarantee into $(\varepsilon,\delta)$-DP using standard tools.
We use three facts:
\begin{itemize}
    \item Given a query $q$ with sensitivity $\Delta_2 := \sup_{x,x'\atop\text{neighboring}} \|q(x)-q(x')\|_2$, releasing $M(x) = \mathcal{N}(q(x),\sigma^2\mathbb{I})$ satisfies $\frac{\Delta_2^2}{2\sigma^2}$-zCDP.
    \item Composing an algorithm satisfying $\rho_1$-zCDP with an algorithm satisfying $\rho_2$-zCDP yields an algorithm satisfying $(\rho_1+\rho_2)$-zCDP.
    \item For any $\rho,\delta > 0$, $\rho$-zCDP implies $\left(\rho + 2\sqrt{\rho\log(1/\delta)},\delta\right)$-DP.
\end{itemize}

\section{Privacy Under Continual Observation and Binary Tree Aggregation} 
\label{sec:dp-tree-appendix}

Consider a data stream  $D = x_1,\ldots,x_n$ with each $x_i\in[0,L]$. The objective is to output $\left\{S_j=\sum\limits_{i=1}^j x_i\right\}$ while preserving DP (with the neighborhood relation defined w.r.t. changing any one of the $x_i$'s). We will use the so-called \emph{binary tree aggregation algorithm}~\cite{Dwork-continual,CSS11-continual,honaker2015efficient} stated below.

\begin{algorithm}[ht]
\caption{DP-Binary Tree Aggregation (exposition from~\cite{kairouz2021practical})}
\begin{algorithmic}[1]
\REQUIRE Data set: $D=\{x_1,\ldots,x_n\}$ with each $x_i\in[0,L]$, noise standard deviation $\sigma$.
\STATE Initialize a complete binary tree with $2^{\lceil\lg(n)\rceil}$ leaf nodes, with each node being sampled from $\calN(0,\sigma^2)$.
\FOR{$i\in[n]$}
    \STATE Add $x_i$ to all the nodes on the path from the $i$-th leaf node to the root of the tree.
    \STATE Initialize $S^{\sf priv}_i\leftarrow 0$, and convert $i$ to the binary representation $[b_1,\ldots,b_h]$ with $b_1$ being the most significant bit, and $h=\lceil\lg(n)\rceil$ being the height of the tree.
    \STATE Let $[\nd_1,\ldots,\nd_h]$ be the nodes from the root to the $i$-th leaf node of the tree.
    \FOR{$j\in[h]$}
        \STATE {\bf If} $b_j==1$, then add the value in the left sibling of $\nd_j$ to $S^{\sf priv}_i$. Here, if $\nd_j$ is the left child, then treat it as its own sibling.
    \ENDFOR
    \STATE Output $S^{\sf priv}_i$.
\ENDFOR
\end{algorithmic}
\label{Alg:abcd2}
\end{algorithm}

One example DP-Binary tree with nodes encoding is shown in Figure~\ref{fig:dp-tree}.

\begin{figure}[ht]
\centering
\includegraphics[width=0.4\textwidth]{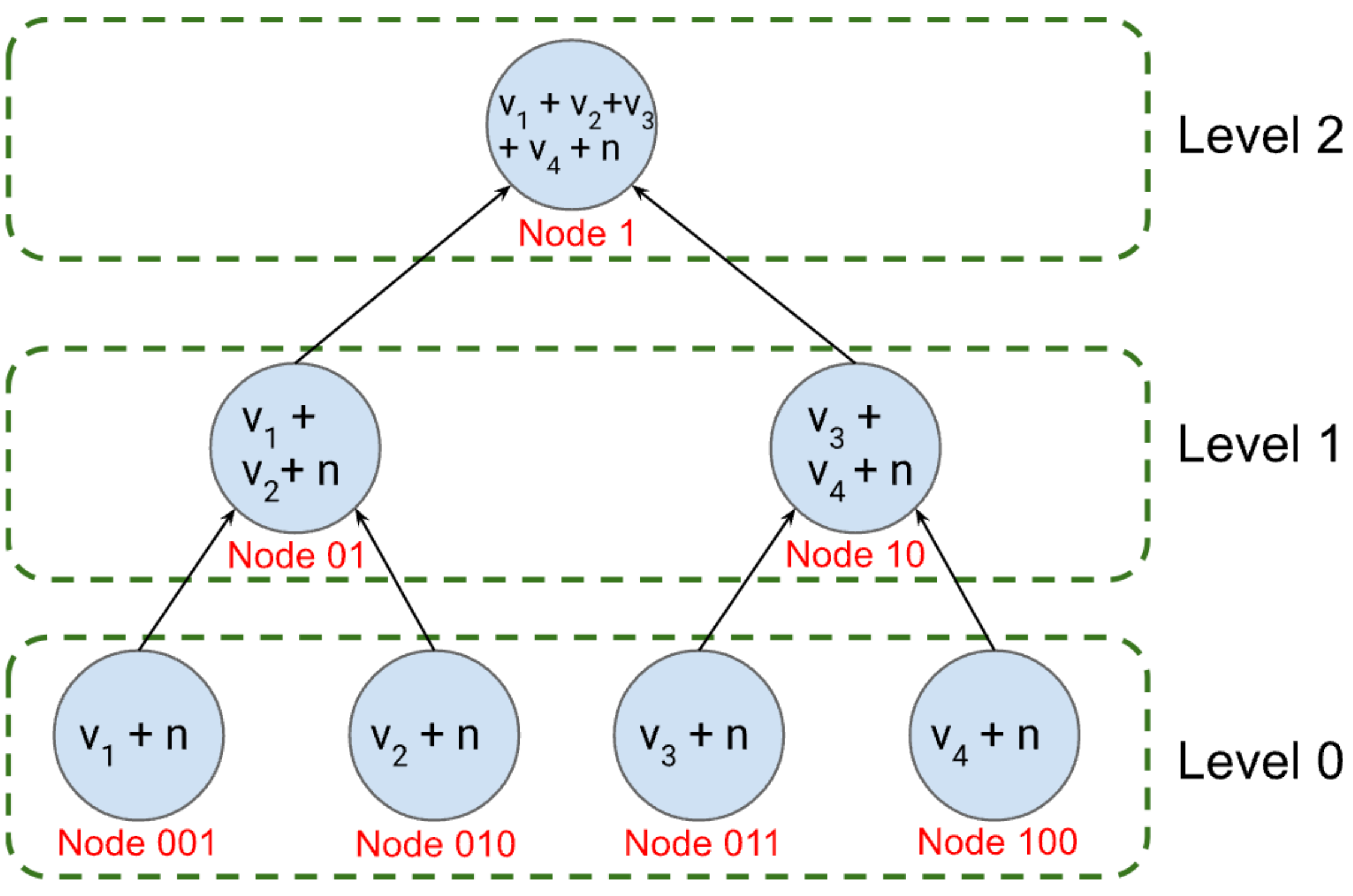}
\caption{DP-Tree with nodes encoding}
\label{fig:dp-tree}
\end{figure}

\begin{thm}
Algorithm~\ref{Alg:abcd2} satisfies $\frac{L^2\lceil\lg(n)\rceil}{2\sigma^2}$-zCDP. This is equivalent to $(\varepsilon, \delta)$-DP with $\varepsilon = \frac{L^2\lceil\lg(n)\rceil}{2\sigma^2} + 2 \sqrt{\frac{L^2\lceil\lg(n)\rceil}{2\sigma^2} \lg (1/\delta)}$.
\label{thm:privGuaranteeTree}
\end{thm}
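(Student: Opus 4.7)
The plan is to reduce Algorithm~\ref{Alg:abcd2} to a single Gaussian-mechanism release followed by deterministic postprocessing, and then invoke the three zCDP facts listed at the end of Section~\ref{sec:dpdef}. First, I would observe that the algorithm is equivalent to the following two-phase procedure: (i) for every node $v$ of the tree, publish $\tilde c_v := c_v(D) + Z_v$, where $c_v(D)$ is the sum of all inputs $x_i$ whose leaf lies in the subtree rooted at $v$ and $Z_v\sim\calN(0,\sigma^2)$ is drawn independently, and (ii) for each $i\in[n]$, return the sum of at most $\lceil\lg n\rceil$ of the $\tilde c_v$'s selected according to the binary expansion of $i$, exactly as in lines 5--8 of the algorithm. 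Since step (ii) is a deterministic function of the output of step (i), by the postprocessing property of zCDP it suffices to bound the zCDP parameter of step (i).

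Next, I would write step (i) as the Gaussian mechanism applied to the vector-valued query $q(D)=(c_v(D))_v$ and bound its $\ell_2$-sensitivity. Under the neighboring relation used in this section (change in a single $x_i\in[0,L]$), the vectors $q(D)$ and $q(D')$ differ only in those coordinates indexed by the ancestors of leaf $i$ in the aggregation tree, of which there are $\lceil\lg n\rceil$. Each such coordinate differs by at most $L$, so $\|q(D)-q(D')\|_2 \leq L\sqrt{\lceil\lg n\rceil}$. Substituting into the Gaussian-mechanism zCDP fact from Section~\ref{sec:dpdef} yields $\frac{L^2\lceil\lg n\rceil}{2\sigma^2}$-zCDP for step (i), and hence for Algorithm~\ref{Alg:abcd2} by postprocessing.

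Finally, converting to $(\varepsilon,\delta)$-DP is immediate from the third zCDP fact: $\rho$-zCDP implies $\bigl(\rho + 2\sqrt{\rho\log(1/\delta)},\delta\bigr)$-DP. Setting $\rho=\frac{L^2\lceil\lg n\rceil}{2\sigma^2}$ gives precisely the stated expression for $\varepsilon$.

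The principal obstacle is the sensitivity count in the middle paragraph: one must verify that each input really contributes to at most $\lceil\lg n\rceil$ independently-noised coordinates that are actually read by the postprocessing map in step (ii), so that the bound is tight rather than off by a constant or logarithmic factor. A clean way to formalize this is via the matrix factorization viewpoint: write the linear map from inputs to the sequence of prefix sums as $M=BA$, where the columns of the $\{0,1\}$-valued matrix $A$ encode which nodes each $x_i$ is written into (line 3) and the rows of the $\{0,1\}$-valued matrix $B$ encode which nodes are summed for each output (lines 5--8); then releasing $AD+Z$ with $Z\sim\calN(0,\sigma^2 I)$ is a single Gaussian mechanism whose $\ell_2$-sensitivity equals $L$ times the maximum column norm of $A$, making it transparent that only the height of the tree, and not the total number of nodes, enters the final privacy bound.
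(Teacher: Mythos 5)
Your proof is correct, and it reaches the paper's bound by a slightly different decomposition. The paper's argument treats each level of the tree as its own Gaussian mechanism with scalar sensitivity $L$ (since a single $x_i$ touches exactly one node per level), obtains $\frac{L^2}{2\sigma^2}$-zCDP per level, and then invokes zCDP composition across the $\lceil\lg n\rceil$ levels; you instead view the entire vector of noisy node counts as one Gaussian mechanism, bound its $\ell_2$-sensitivity by $L\sqrt{\lceil\lg n\rceil}$ (each input lies in at most $\lceil\lg n\rceil$ noised coordinates, each shifted by at most $L$), and apply the Gaussian-mechanism fact once, with the prefix-sum reconstruction handled by postprocessing. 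For Gaussian noise these two routes are exactly equivalent -- additivity of zCDP parameters under composition is the same thing as additivity of squared per-level sensitivities -- so neither buys a quantitatively better constant; what your version adds is that it makes explicit two steps the paper leaves implicit (that the released prefix sums are a deterministic postprocessing of the noisy node values, and that the relevant structural fact is the bound on how many noised coordinates one input can touch), and your matrix-factorization remark correctly identifies the column norm of the encoding matrix as the quantity controlling privacy, which is the viewpoint of the matrix-mechanism literature the paper cites. The only point to watch is the count of nodes on a leaf-to-root path (leaf, internal ancestors, root), where a $\lceil\lg n\rceil$ versus $\lceil\lg n\rceil+1$ convention question arises; your counting matches the convention used in the theorem statement and in the paper's own level-composition argument, so this is a shared bookkeeping convention rather than a gap in your proof. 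The final conversion to $(\varepsilon,\delta)$-DP via $\rho$-zCDP implying $\left(\rho+2\sqrt{\rho\log(1/\delta)},\delta\right)$-DP is applied exactly as in the paper.
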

The proof of Theorem~\ref{thm:privGuaranteeTree} follows immediately from the zCDP guarantee of Gaussian mechanism, and composition over the levels of the binary tree. One can also show the following in terms of utility.
\begin{thm}
With probability at least $1-\beta$, the following is true for any $i\in[k]$:
$$\left|S^{\sf priv}_i-S_i\right|\leq \sqrt{\frac{2\ln(n/\beta)\lceil\lg(n)\rceil\sigma^2}{\pi}}.$$
\label{thm:utilGuaranteeTree}
\end{thm}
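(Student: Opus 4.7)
The plan is to first characterize $S^{\sf priv}_i - S_i$ as a mean-zero Gaussian with variance at most $\lceil\lg n\rceil \sigma^2$, and then apply a standard Gaussian tail bound together with a union bound over $i \in [n]$.

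First, I would trace through Algorithm~\ref{Alg:abcd2} to establish the distribution of the noise in $S^{\sf priv}_i$. By construction, each tree node stores the exact partial sum of the $x_j$'s in its subtree, plus a single independent $\mathcal{N}(0,\sigma^2)$ draw added at initialization. The release step encodes $i$ in binary and, for each set bit $b_j=1$, adds the left-sibling node value along the root-to-$i$ path; this selects a collection of nodes whose subtree index sets form a disjoint partition of $\{1,\ldots,i\}$. Hence the ``signal'' part of the sum telescopes to exactly $S_i$, and the ``noise'' part is a sum of at most $h := \lceil\lg n\rceil$ independent $\mathcal{N}(0,\sigma^2)$ draws. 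Therefore $S^{\sf priv}_i - S_i \sim \mathcal{N}(0, k_i \sigma^2)$ for some $k_i \leq h$.

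Second, I would apply a tail bound for a Gaussian $Y \sim \mathcal{N}(0,\tau^2)$ with $\tau^2 \leq h\sigma^2$. The Mills'-ratio estimate $\Pr[|Y|>t] \leq \sqrt{2/\pi}\,(\tau/t)\exp\bigl(-t^2/(2\tau^2)\bigr)$ is the natural source of the $\sqrt{2/\pi}$ prefactor appearing in the stated bound; choosing $t$ so the right-hand side is at most $\beta/n$ and then taking a union bound over the $n$ possible indices $i$ yields, with probability at least $1-\beta$, the uniform bound of the claimed form $\sigma\sqrt{2 h \ln(n/\beta)/\pi}$. A cruder Chernoff-style bound $\exp(-t^2/(2\tau^2))$ would give the same functional form up to constants, at the cost of losing the $1/\sqrt\pi$ factor.

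The only potential obstacle is the combinatorial step that identifies, for each $i$, exactly which noise nodes are summed into $S^{\sf priv}_i$, and verifies that (i) the ``left-sibling at a 1-bit'' rule picks out a set of at most $h$ nodes whose subtrees partition $\{1,\ldots,i\}$, and (ii) these nodes are distinct across a single call (so their noise draws are independent). Once that bookkeeping is done, the remainder is a textbook Gaussian tail bound plus union bound, and no further properties of the mechanism are needed.
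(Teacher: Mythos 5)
Your overall structure---the dyadic decomposition showing that the ``signal'' part of $S^{\sf priv}_i$ telescopes to exactly $S_i$ while the noise is a sum of at most $h=\lceil\lg n\rceil$ independent $\mathcal{N}(0,\sigma^2)$ node draws (hence $S^{\sf priv}_i-S_i\sim\mathcal{N}(0,k_i\sigma^2)$ with $k_i\le h$), followed by a Gaussian tail bound and a union bound over the $n$ indices---is exactly the argument the paper gestures at when it says the theorem ``follows immediately from the tail properties of the Gaussian distribution,'' and your bookkeeping of the left-sibling/1-bit rule (disjoint subtrees partitioning $\{1,\ldots,i\}$, distinct nodes, independent noise) is correct. (The statement's ``$i\in[k]$'' should be read as $i\in[n]$, as you do.)

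The gap is in the final calibration step. You assert that choosing $t$ so that the Mills-ratio bound $\sqrt{2/\pi}\,(\tau/t)\exp\bigl(-t^2/(2\tau^2)\bigr)$ is at most $\beta/n$ ``yields the claimed form'' $t=\sigma\sqrt{2h\ln(n/\beta)/\pi}$. It does not: plugging that $t$ in gives $\exp\bigl(-t^2/(2\tau^2)\bigr)=(\beta/n)^{1/\pi}$, so the per-index failure probability is of order $(\beta/n)^{1/\pi}/\sqrt{\ln(n/\beta)}$, which is far larger than $\beta/n$, and after the union bound the total failure probability is roughly $n^{1-1/\pi}\beta^{1/\pi}$ rather than $\beta$ (already vacuous for, say, $n=1024$, $\beta=0.05$). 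The $\sqrt{2/\pi}$ prefactor in the Mills ratio multiplies the \emph{probability}; it only shaves a lower-order additive term off the threshold and cannot be converted into a $1/\pi$ under the logarithm. What your argument actually proves is the bound $t=\sigma\sqrt{2h\ln(cn/\beta)}$ for a modest constant $c$, i.e., the stated inequality with the $\pi$ deleted---a threshold larger by a factor $\sqrt{\pi}\approx 1.77$. Since the paper supplies no derivation, the $1/\sqrt{\pi}$ in the statement looks more like an artifact of identifying the noise scale with the half-normal mean $\E|\mathcal{N}(0,\sigma^2)|=\sigma\sqrt{2/\pi}$ than something a tail-plus-union-bound argument produces; so you should either prove the weaker (and correct) $\sigma\sqrt{2h\ln(n/\beta)}$-type bound or explicitly flag the constant-factor discrepancy, rather than claiming the stated constant follows from this calculation.
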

The proof of Theorem~\ref{thm:utilGuaranteeTree} follows immediately from the tail properties of Gaussian distribution. One can improve the constants in the guarantee by using variance reduction techniques from~\cite{honaker2015efficient}, which we briefly describe below.

\mypar{Honaker estimation for variance reduction} The idea of Honaker estimation~\cite{honaker2015efficient} is to use multiple estimates for the information of same node in the binary tree for variance reduction. Consider any node $\node_{i}$, and the information that is inserted into it by Algorithm~\ref{Alg:abcd2}. Let $\calT_{\node_{i}}$ be the subtree rooted at node $\node_{i}$. Notice that sum of all the nodes at each level of $\calT_{\node_{i}}$ is an unbiased and independent estimate of the information contained in $\node_{i}$. Furthermore, we also know the variance of such an estimate. For example, if a level $\calT_{\node_{i}}$ has $m$ nodes, then the variance of the estimate from that level is $m\sigma^2$. We can use this information to reduce the variance in the estimate of $\node_{i}$ The following is the formalization of this idea.

Let $\level_0,\ldots,\level_{\kappa}$ be the levels of the subtree rooted at ${\sf node}_{i}$, with $\level_0$ being the level of ${\sf node}_{i}$. Notice that sum of all the nodes at level $\level_j$ is an unbiased estimate of the true value of node ${\sf node}_{i}$ (called the ${\sf sum}(\level_j)$), and the variance is $\sigma_j^2=2^{j}\sigma^2$ (since there are $2^{j}$ nodes at level $\level_j$). Honaker estimate is $\sum\limits_{j=0}^{\kappa-1} c_j\cdot {\sf sum}(\level_j)$, where $c_j=\frac{1/2^j}{\sum\limits_{j=0}^{\kappa-1} (1/2^{j})}$. Therefore, the variance in the Honaker's estimate is as follows:
\begin{equation}{\sf Variance}({\sf node}_{i})=\left(\sum\limits_{j=0}^{\kappa-1} c_j^2\cdot 2^j\right)\cdot\sigma^2=\frac{1}{2\cdot(1-2^{-\kappa})}\cdot \sigma^2.
\label{eq:HonackerVar}
\end{equation}

Now consider the nodes ${\sf node}_1,{\sf node}_2,\ldots,{\sf node}_m$ to be the nodes used to compute the DP variant of $S_i$ at time instance $i\in[n]$ in Algorithm~\ref{Alg:abcd2}. Notice that the noise due the Honaker estimate mentioned above is still independent for each of the nodes. Hence, we have the error $S^{\sf priv}_i-S_i$ follows the following distribution:
\begin{equation}
\left(S^{\sf priv}_i-S_i\right)\sim\mathcal{N}\left(0,\sum\limits_{j=1}^m {\sf Variance}({\sf node}_{j})\right).
\label{eq:distTree}
\end{equation}
Unlike Theorem~\ref{thm:utilGuaranteeTree} we write the exact distribution of the error above is because, we will use this distribution to decide on the threshold for key selection in Section~\ref{sec:partkeyselection}.

\mypar{Ensuring user-level DP}

Although existing works commonly assume that each user corresponds to a single record, in practice, a single user can contribute multiple records. Thus, we must account for this in the DP analysis. The simplest approach -- which we follow -- is to limit the number of contributions per person to some constant $C$ within the entire data stream $D$. If we have a DP guarantee for a single record, we can apply group privacy to obtain a DP guarantee for $C$ records. Specifically, if we have $(\varepsilon,\delta)$-DP with respect to the addition or removal of one record, then this implies $\left(C \cdot \varepsilon, \frac{e^{C \cdot \varepsilon}-1}{e^\varepsilon-1} \delta \right)$-DP with respect to the addition or removal of $C$ records \cite{vadhan2017complexity}. The group privacy is used in privacy accounting for hierarchical perturbation (Section~\ref{sec:accounting}).

In worst case scenario, we have to go through group privacy to achieve user-level DP. However, in practice, depending on how we bound user contribution, if the sensitivity of any user is inflated by at most a factor of $C$, the composition bounds instead of group privacy bounds is applicable, which gives better bounds. Later in Section~\ref{sec:partkeyselection}, we apply advanced composition to extend the privacy guarantee from the single user contribution ($C=1$) to a more general case ($C\geq1$).
\section{Missing Details from Algorithm~\ref{Alg:key-selection}}
\label{app:key-selection}

In this section, we provide the missing details for the key selection algorithm (Algorithm~\ref{Alg:key-selection}).

\mypar{Setting $\sigma$ in Line~\ref{line:keySelection1}} The binary tree has at most $T=|\tr|$ nodes. By Theorem~\ref{thm:privGuaranteeTree}, the tree aggregation algorithm satisfies $\rho=\frac{\lceil\log_2(T)\rceil}{2\sigma^2}$-zCDP. Now, we translate the zCDP guarantee into $(\epsilon,\delta)$-DP (with $\epsilon$ being a parameter of $\sigma$) via~\cite[Proposition 2.3]{mironov2017renyi}~\footnote{The exact computation is from~\url{https://github.com/IBM/discrete-gaussian-differential-privacy/blob/master/cdp2adp.py\#L123}}. Using this we obtain $\sigma$ as a function of fixed $\epsilon$ and $\delta$ and $T$, when used as parameter in Algorithm~\ref{Alg:key-selection} is guaranteed to ensure $(\epsilon,\delta)$-DP for the binary tree used there.

\mypar{Computing accuracy threshold in Line~\ref{line:thresh}} Consider the tree $\tree_k$, and nodes $\node_1,\ldots,\node_k$ (from the tree $\tree_k$) required in the computation of $\hat{q}_{{tr_{i},k}}$ By~\eqref{eq:distTree}, we have the following:
\begin{equation}
\hat{q}_{{tr_{i},k}}-\countt_k(D_{tr_{i}})\sim\calN\left(0,\underbrace{\sum\limits_{j=1}^k {\sf Variance}({\sf node}_{j})}_{\lambda^2}\right).\label{eq:treeCompute}
\end{equation}
In~\eqref{eq:treeCompute}, ${\sf Variance}({\sf node}_{j})$ is obtained by~\eqref{eq:HonackerVar}. We now compute $\tau$ based on the inverse cdf of $\calN(0,\lambda^2)$ at $1-\beta$.
\section{Proof for Theorems 3.1}
\label{sec:theorem31-proof-appendix}

\begin{proof}
   Consider two datasets $D$ and $D'$ which differ by the addition or removal of one key $k_*$.
    Let $A(D)$ denote Algorithm \ref{Alg:key-selection} run on input $D$. 
    First we note that we can ignore all other keys $k \ne k_*$ because the behaviour of Algorithm \ref{Alg:key-selection} on those keys is independent of its behaviour with respect to $k_*$.
    
    For the purposes of the analysis, we consider a different algorithm $A'(D)$ which does the following with respect to $k_*$. It initializes the tree aggregation mechanism $\tree_{k_*}$ at the beginning. At each time $t \in \tr$, the algorithm $A'(D)$ computes $\hat{q}_{t,k_*} \gets \gettotal(\tree_{k_*})$. If $\hat{q}_{t,k_*} > \mu + \tau$, then it outputs $(k_*, \hat{q}_{t,k_*})$; otherwise it outputs nothing about $k_*$ at time $t$.
    
    The algorithm $A'$ is $(\varepsilon,\delta)$-DP. This is because it is simply a postprocessing of the tree aggregation mechanism, which is set to have this DP guarantee.
    
    The only difference between the outputs of $A(D)$ and $A'(D)$ is that, if $\countt_{k_*}(D_{t_i}) \le \mu$ and $\hat{q}_{t,k_*} > \mu + \tau$, then $A'(D)$ outputs $(k_*, \hat{q}_{t,k_*})$, but $A(D)$ outputs nothing regarding $k_*$.
    Since $\hat{q}_{t,k_*} > \mu + \tau$ is meant to be an approximation to $\countt_{k_*}(D_{t_i}) \le \mu$, this means the outputs only differ when the tree aggregation mechanism $\tree_{k_*}$ has error $>\tau$. The accuracy guarantee of $\tree_{k_*}$ ensures this happens with probability at most $\beta$. That is, we can define a coupling such that $\pr{}{A(D) \ne A'(D)} \le \beta$ (and the same for $D'$ in place of $D$).
    
    Thus we can obtain the DP guarantee for $A$:
    Let $E$ be an arbitrary measurable set of outputs of $A$. We have
    \begin{align*}
        \pr{}{A(D) \in E} &\le \pr{}{A'(D) \in E} + \pr{}{A(D) \ne A'(D)} \\
        &\le e^\varepsilon \cdot \pr{}{A'(D') \in E} + \delta + \pr{}{A(D) \ne A'(D)} \\
        &\le e^\varepsilon \cdot (\pr{}{A(D') \in E} + \pr{}{A(D) \ne A'(D)}) + \delta \\ 
            &\ \ \ \ + \pr{}{A(D) \ne A'(D)} \\ 
        &= e^\varepsilon \cdot \pr{}{A(D') \in E} + \delta + (e^\varepsilon + 1) \cdot \pr{}{A(D) \ne A'(D)} \\
        &\le e^\varepsilon \cdot \pr{}{A(D') \in E} + \delta + (e^\varepsilon + 1) \cdot \beta.
    \end{align*}
\end{proof}

\section{Proof for Theorems 3.2}
\label{sec:theorem32-proof-appendix}

\begin{proof}
By using Theorem~\ref{thm:utilGuaranteeTree}, and using the notation in Line~\ref{line:thresh} the following is immediate, when the noise added to each node of the binary tree in Algorithm~\ref{Alg:key-selection} is $\calN(0,\sigma^2)$. For any fixed key $k$,

{\small
\begin{equation}
\begin{split}
    \pr{\tree_k}{\forall tr_{i}\in\tr ~~  |\hat{q}_{{tr_{i},k}}-\countt_k(D_{tr_{i}})| \le  \sqrt{\frac{2\ln(T/\beta)\lceil\lg(T)\rceil\sigma^2}{\pi}}} \\
    \ge 1-\beta.
    \label{eq:threshutil1}
\end{split}
\end{equation}}
Setting $\sigma^2=\frac{2\lceil\lg(T)\rceil\ln(1.25/\delta)}{\epsilon^2}$, and using the translation of zCDP to $(\epsilon,\delta)$-DP (based on the privacy statement in Theorem~\ref{thm:privGuaranteeTree}) completes the proof.
\end{proof}
\end{document}